\DeclareMathOperator{\rk}{rk}        
\newtheorem{thm}{Theorem}[section]   
\newtheorem{cor}[thm]{Corollary}
\newtheorem{exa}[thm]{Example}
\newtheorem{lemma}[thm]{Lemma}
\newtheorem{prop}[thm]{Proposition}
\newtheorem{example}[thm]{Example}
\newtheorem{defn}[thm]{Definition}
\newtheorem{rem}[thm]{Remark}
\newcommand{\Fq}{\mathbb{F}_q}
\newcommand{\N}{\mathbb{N}}
\newcommand{\I}{\mathcal{I}}
\newcommand{\K}{\mathbb{K}}
\newcommand{\C}{\mathcal{C}}
\def\ker{\operatorname{ker}}
\def\min{\operatorname{min}}
\def\c1{\operatorname{c_1}}
\def\c2{\operatorname{c_2}}
\def\C{{\mathcal C}}
\def\N{{\mathcal N}}
\def\I{{\mathcal J}}
\def\K{{\mathcal K}}
\def\+{\oplus}                   % direct sum
\def\*{\otimes}                  % tensor product
\def\ker{\operatorname{Ker}}
\def\Fq{\mathbb{F}_{q}}
\begin{document}
\title[Greedy weights for matroids]{Greedy weights for matroids}
%matrix weights of Delsarte codes}

\author[Johnsen]{Trygve Johnsen}
\address{Department of Mathematics and Statistics, %  \newline \indent
 UiT-The Arctic University of Norway  \newline \indent %Troms{\o}, 
N-9037 Troms{\o}, Norway}
\email{trygve.johnsen@uit.no}
\thanks{}

\author[Verdure]{Hugues Verdure}
\address{Department of Mathematics and Statistics, %  \newline \indent
 UiT-The Arctic University of Norway  \newline \indent %Troms{\o}, 
N-9037 Troms{\o}, Norway}
\email{Hugues.Verdure@uit.no} %sudhirghorpade@gmail.com}
\thanks{Both authors are  partially supported by grant 280731 from the Research Council of Norway. We are grateful to the  Department of Mathematics, IIT-Bombay for a stimulating stay, during which a substantial part of the present work was completed.}

\subjclass{05E45, 94B05, 05B35, 13F55}
\date{\today}

\begin{abstract}
We introduce greedy weights of matroids, inspired by those for linear codes. We show that a Wei duality holds for two of these types of greedy weights for matroids. Moreover we show that in the cases where the matroids involved are associated to linear codes, our definitions coincide with those for codes. Thus our Wei duality is a generalization of that for linear codes given by Schaathun. In the last part of the paper we show how 
some important chains of cycles of the matroids appearing, correspond to chains of component maps of minimal resolutions of the independence complex of the corresponding matroids. We also relate properties of these resolutions to chainedness and greedy weights of the matroids, and in many cases codes, that appear. 
\end{abstract}

\maketitle
\section{Introduction}

For a linear code over a finite field $\mathbb{F}_q$ an important way to characterize the code is to decribe its parameters, the word length $n$, the dimension $k$, and the minimum distance $d$. A refinement of the minimum distance is the ordered set of the generalized Hammimg weights $d_1,\cdots,d_k,$ where $d_i$ is the smallest support of any $i$-dimemsional linear subcode of $C$, for $i=1,\cdots,k$. In particular $d_1=d$.
In the 1990's (and early 2000's) several authors (see e.g. \cite{chen1999weight}, \cite{Cohen1999} \cite{Chen04}, \cite{Chen01}, \cite{Schaathun01}. \cite {Schaathun04}, \cite{Schaathun01phd}) became interested not only
in the individual subcodes of each dimension that where optimal with respect to (small) support size, but also in  chains of codes that where somehow optimal, in a similar way. This gave rise to various definitions of greedy weights, which we will recall in Subsection \ref{basicdef}. These weights are similar to, but in general different from, the generalized Hamming weights $d_i$. The topic has attracted new interest in recent years (\cite{Li17}, \cite{Bai19}).

In \cite{Johnsen13} we described how the $d_i$ are determined by certain properties of the matroid coming from any parity check matrix of the linear code. In the present paper we will describe how also the various greedy weights are determined by the matroids associated to the code. Since this description can be done for any finite matroid , we will define and describe greedy weights for finite matroids in general,
and show that they coincide with those of linear codes when the matroids come from such codes.
We will show a form of Wei duality relating certain weights of a matroid and its dual, inspired by a corresponding result for 
linear codes (\cite{Schaathun01phd}). 
 
The spirit of the paper is the following: There is a poset of cycles of the matroid coming from any parity check matrix of the code, where a cycle is an inclusion minimal set among those subsets of $E=\{1,\cdots,n\}$ having a fixed nullity for the rank function in question. This is dual to (the upside down version of) the poset of flats  of the matroid coming from any generator matrix of the code. We will show that the greedy weights correspond to optimal ways to traverse the nodes of this poset through maximal chains of it.
We define a lexicographical and a rev-lexicographical order on these chains in order to make it precise 
in what sense they are optimal. 

In the last part we relate our results to a more concrete way to traverse maximal chains via non-zero component maps in a minimal resolution of a certain Stanley-Reisner ring, where the components in each fixed step corresponds to the nodes of a corresponding fixed rank of the poset of cycles.

 This paper is organized as follows. In Section \ref{defs} we will give some necessary definitions relating to codes and matroids. In Section \ref{important} we will
describe the greedy weights for matroids, relate to those of codes, and show our form of Wei duality, which is inspired by the corresponding Wei duality for codes, proven in \cite{Schaathun01phd}.
In Section \ref{main} we discuss the connection between resolutions of the Stanley-Reisner ring associated to the matroid or the code, and the greedy weights. We also discuss the notion of chained codes and chained matroids.

The main results are Theorems \ref{firstmain}, \ref{th:coincide} and \ref{nonzerocomp}.

%%%%%%%%%%%%%%%%%%%%%%%%%%%%%%%%%%%%%%%%%%%%%%%%%%%%
%
%  DEFINITIONS
%
%%%%%%%%%%%%%%%%%%%%%%%%%%%%%%%%%%%%%%%%%%%%%%%%%%%%

\section{Definitions and notation} \label{defs}

\subsection{Generalized Hamming weights and greedy weights of codes}\label{basicdef}

\begin{defn}\label{def:support}
Let $C$ be a $[n,k]$ linear code over $\Fq$. Let $\bm{c} = (c_1,\cdots,c_n) \in C$. The Support of $\bm{c}$ is the set \[Supp(\bm{c}) = \{ i \in \{1,\cdots,n\}: c_i \neq 0\}.\] Its weight is \[wt(\bm{c}) = |Supp(\bm{c})|.\] Similarly, if $T \subset C$, then its support and weight are \[Supp(T) = \bigcup_{\bm{c} \in T} Supp(\bm{c}) \textrm{ and } wt(T) = |Supp(T)|.\]
\end{defn}

In \cite{Chen1999} and \cite{Bai19} one describes and treats greedy weights of linear codes $C$ over finite fields.
First we recall the definitions of the generalized Hamming weights introduced by Wei in~\cite{Wei1991}:
\begin{defn}\label{def:Wei}
Let $C$ be a $[n,k]$-linear code. For $1 \leqslant r \leqslant k$, the $r$-th generalized Hamming weight is \[d_{r}= \min \{w(D) | D\textrm{ is a subcode of }C\textrm{ with }\dim D=r\}.\]
\end{defn}

A subcode $D \subset C$ \emph{computes} $d_r$ if it is of dimension $r$ and weight $d_r$.

Then, following the terminology of~\cite{Schaathun01} or~\cite{Schaathun04}, we have the (bottom-up) greedy weights of a code:

\begin{defn}\label{def:bottomupsubcode}
Let $C$ be a $[n,k]$-linear code. A (bottom-up) greedy $1$-subcode is a subcode of dimension $1$ of minimal weight. For $r\geqslant 2$, a (bottom-up) greedy $r$-subcode is a subcode of dimension $r$ containing a (bottom-up) greedy $(r-1)$-subcode, and such that no other such subcode has lower weight.
\end{defn}

\begin{defn}\label{def:bottomup}
Let $C$ be a $[n,k]$-linear code. For $1 \leqslant r \leqslant k$, the  $r$-th (bottom-up) greedy weight $e_r$ of $C$ is the weight of any (bottom-up) greedy $r$-subcode.
\end{defn}

\begin{rem}
We have $e_1=d_1$.
\end{rem}

Also introduced by Schaathun (\cite[Definition 6]{Schaathun01}) are the top down greedy weights:

\begin{defn}\label{def:topdownsubcode}
Let $C$ be a $[n,k]$-linear code. A top-down greedy $k$-subcode is $C$ itself. For $r\leqslant k-1$, a top-down greedy $r$-subcode is a subcode of dimension $r$ contained in a top-down greedy $(r+1)$-subcode, and such that no other such subcode has lower weight.
\end{defn}

\begin{defn}\label{def:topdown}
Let $C$ be a $[n,k]$-linear code. For $1 \leqslant r \leqslant k$, the  $r$-th top-down greedy weight $\tilde{e}_r$ of $C$ is the weight of any top-down greedy $r$-subcode.
\end{defn}

\begin{rem}
We have $\tilde{e}_k=d_k$.
\end{rem}

There is also another definition, used e.g by \cite{Bai19}, essentially introduced in \cite{Cohen1999}:

\begin{defn}\label{def:CEZsubcode}Let $C$ be a $[n,k]$-linear code.  A CEZ greedy $1$-subcode is a subcode of dimension $1$ of minimal weight. For $r\geqslant 2$, a CEZ greedy $r$-subcode is a subcode of dimension $r$ containing a subcode that computes $d_{r-1}$, and such that no other such subcode has lower weight.
\end{defn}

\begin{defn}\label{def:CEZ} Let $C$ be a $[n,k]$-linear code. For $1 \leqslant r \leqslant k$, the  $r$-th CEZ greedy weight $g_r$ of $C$ is the weight of any CEZ greedy $r$-subcode.
\end{defn}
\begin{rem} We have $g_1=e_1=d_1$ and $g_2=e_2$
\end{rem}

For more interesting material on this topic, see \cite{Chen04}, \cite{Chen01}, \cite{chen1999weight}, \cite{Li17}, \cite{Schaathun01}.

\subsection{Matroids}

There are many equivalent definitions of a matroid. We refer to~\cite{Oxley11} for a deeper study of the theory of matroids. 

\begin{defn}\label{def:matroid}
A matroid is a pair $M=(E,r)$ where $E$ is a finite set and $r: 2^E \rightarrow \N$ is a function, called the rank function, satisfying \begin{itemize}
\item[($R_1$)] If $X \subset E$, then \[0 \leqslant r(X) \leqslant |X|,\]
\item[($R_2$)] If $X \subset Y \subset E$ then \[r(X) \leqslant r(Y),\]
\item[($R_3$)] If $X,Y$ are subsets of $E$, then \[r(X \cap Y) + r(X \cup Y) \leqslant r(X) + r(Y).\] The rank of the matroid is $r(M)=r(E)$.
\end{itemize}

\end{defn}

It is a well known fact the rank  function of a matroid is \emph{unit rank increase}, that is, if $X \subset E$ and $x \in E$, then \[r(X) \leqslant r(X \cup \{x\}) \leqslant r(X)+1.\]

\begin{defn}\label{def:nullity}
The nullity function of the matroid $(E,r)$ is the function defined on $2^E$ by: for $X \subset E$, \[n(X)=|X| - r(X).\]
\end{defn}

The nullity function of a matroid is also unit rank increase. Moreover, it satisfies $(R_1)$, $(R_2)$ as well as \[n(X \cap Y) + n(X \cup Y) \geqslant n(X) + n(Y)\] for $X,
Y$ subsets of $E$.

\begin{defn}\label{def:dual}
Let $M=(E,r)$ be a matroid. Then its dual matroid is the matroid $\overline{M}=(E,\overline{r})$ where $\overline{r}$ is defined by \[\overline{r}(X) = |X| + r(E\-X) - r(E)\] for $X \subset E$.
\end{defn}

Some subsets of the ground set of a matroid will be of special interest in this article:

\begin{defn}\label{def:circuits}
Let $(E,r)$ be a matroid. A subset $X \subset E$ is dependent if \[r(X)<|X| \Leftrightarrow n(X)>0\] and independent if \[r(X)=|X| \Leftrightarrow n(x)=0.\] A \emph{circuit} is a inclusion minimal dependent set. We denote by $\I$ and $\C$ the sets of independent sets and circuits respectively. 
\end{defn}

For $1 \leqslant i \leqslant |E|-r(M)$ will denote by $\mathcal{N}_i$ the set \[\mathcal{N}_i = \{X \subset E,\ n(X)=i\}\] and by $N_i$ the inclusion minimal elements of $\mathcal{N}_i$. It is clear that \[\mathcal{C} = N_1.\]
A \emph{cycle} is an element of $N_i$ for some $i$. Cycles can also be described as unions of circuits, and the nullity of the cycle is equal to the maximal number of non-redundant circuits in the cycle (\cite{Johnsen13}).

If $C$ is a $[n,k]$-linear code given by a $(n-k) \times k$ parity check matrix $H$, then we can associate to it a matroid $M_C=(E,r)$, where $E=\{1,\cdots ,n\}$ and if $X \subset E$, then \[r(X)=\rk(H_X),\] where $H_X$ is the column submatrix of $H$ indexed by $X$. It can be shown that this matroid is independent of the choice of the parity check matrix of the code, and we may thus call it the matroid of $C$.

\subsection{Resolutions}

If $(M,r)$ is a matroid, then $(E,\I)$ is naturally a simplicial complex (that is, $\I \neq \emptyset$ and is closed under taking subsets). Let $\K$ be a field. We can associate to $M$  a monomial ideal $I_M$ in $S=\K[\{X_e\}_{e \in E}]$ defined by \[I_M =< \bm{X}^\sigma: \sigma \in \C>\] where $\bm{X}^\sigma$ is the monomial product of all $X_e$ for $e \in \sigma$. This ideal is called the Stanley-Reisner ideal of $M$ and the quotient $S_M=S/I_M$ the Stanley-Reisner ring associated to $M$. We refer to~\cite{Herzog11} for the study of such objects. As described in \cite{Johnsen13} the Stanley-Reisner ring has minimal $\mathbb{N}$ and $\mathbb{N}^n$-graded free resolutions 

\[
0 \leftarrow S_M \leftarrow S \leftarrow \bigoplus_{j \in \N}S(-j)^{\beta_{1,j}} \leftarrow 
\cdots \leftarrow \bigoplus_{j \in \N}S(-j)^{\beta_{|E|-r(M),j}} \leftarrow 0
\]

and \[0 \leftarrow S_M \leftarrow S \leftarrow \bigoplus_{\alpha \in \N^n}S(-\alpha)^{\beta_{1,\alpha}} 
\leftarrow \cdots \leftarrow \bigoplus_{\alpha \in \N^n}S(-\alpha)^{\beta_{|E|-r(M),\alpha}} \leftarrow 0.\] 

 In particular the numbers $\beta_{i,j}$ and $\beta_{i,\alpha}$ are independent of the minimal free resolution,
 (and for a matroid also of the field $\K$) and are called respectively the $\mathbb{N}$-graded and $\mathbb{N}^n$-graded Betti numbers of the matroid. Note also that if $\alpha \not \in \{0,1\}^n$, then $\beta_{i,\sigma}=0$ (\cite[Corollary 1.40]{miller04}).
 We have \[\beta_{i,j} = \sum_{wt(\alpha)=j}\beta_{i,\alpha}.\] We also note that $\beta_{0,0}=1$.

We will also frequently use (\cite[Theorem 1]{Johnsen13}, first part): 
\begin{thm} \label{thm:AAECC}
Let $C$ be a $[n,k]$-code over $\Fq$. The $\mathbb{N}$-graded Betti numbers of the matroid $M_C$ satisfy:  $\beta_{i,j} \neq 0$ if and only if there exists a member in $N_i$ of cardinality $j$. In particular, $\beta_{i,X} \ne 0$ if and only $X \in N_i$. Furthermore \[d_i = \min\{j: \beta_{i,j} \neq 0\}.\]
\end{thm}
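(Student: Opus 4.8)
The plan is to derive all three assertions from Hochster's formula for the $\mathbb{N}^n$-graded Betti numbers of a Stanley--Reisner ring, combined with the well-understood topology of matroid independence complexes. Write $\Delta=(E,\I)$ for the independence complex of $M_C$, and for $X\subseteq E$ let $\Delta_X$ denote the restriction of $\Delta$ to the vertex set $X$; this restriction is exactly the independence complex of the restricted matroid $M_C|_X$, whose rank is $r(X)$ and whose dimension as a complex is $r(X)-1$. Hochster's formula (\cite{Herzog11}, \cite{miller04}) gives
\[
\beta_{i,X} = \dim_{\K}\tilde H_{|X|-i-1}(\Delta_X;\K).
\]
Since $\beta_{i,\alpha}=0$ unless $\alpha\in\{0,1\}^n$, it suffices to treat squarefree degrees $\alpha=X$, and the $\mathbb{N}$-graded statement will then follow by summing, via $\beta_{i,j}=\sum_{|X|=j}\beta_{i,X}$.

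First I would record that matroid independence complexes are shellable, hence Cohen--Macaulay, so the reduced homology of each $\Delta_X$ is concentrated in its top degree $r(X)-1$. Substituting into Hochster's formula, a nonzero $\beta_{i,X}$ forces $|X|-i-1=r(X)-1$, that is $i=|X|-r(X)=n(X)$; thus $\beta_{i,X}\neq 0$ can occur only in homological degree $i=n(X)$.

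Next I would characterise exactly when the top homology $\tilde H_{r(X)-1}(\Delta_X)$ is nonzero, the clean statement being that this happens iff $M_C|_X$ has no coloop, equivalently $X$ is a union of circuits. For the vanishing direction, if $e\in X$ is a coloop of $M_C|_X$ then $e$ lies in no circuit contained in $X$, so adjoining $e$ to any independent subset of $X$ preserves independence; hence $e$ is a cone point of $\Delta_X$ and $\Delta_X$ is contractible. The reverse implication --- that absence of coloops forces nonzero top homology --- is the main obstacle; I would obtain it from the shelling of the matroid complex, whose top reduced Betti number equals the number of bases of $M_C|_X$ containing no broken circuit, a count that is positive precisely when every element lies on a circuit.

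Finally I would translate the combinatorial condition into the poset $N_i$. Because nullity is unit rank increase, for $x\in X$ one has $n(X\setminus x)=n(X)$ iff $x$ is a coloop of $M_C|_X$, and $n(X\setminus x)=n(X)-1$ otherwise; hence $X$ is inclusion-minimal with $n(X)=i$ exactly when $n(X)=i$ and $M_C|_X$ has no coloop, i.e. $X\in N_i$. Combining the previous steps yields $\beta_{i,X}\neq 0 \iff X\in N_i$, and summing over $|X|=j$ gives $\beta_{i,j}\neq 0$ iff some member of $N_i$ has cardinality $j$. For the weight formula, I would observe that $C(X):=\{c\in C:\Supp(c)\subseteq X\}$ is the kernel of $H_X$, of dimension $n(X)$, so any dimension-$i$ subcode has support $Y$ with $n(Y)\geqslant i$, whence $Y$ contains some member of $N_i$; conversely, for an inclusion-minimal $X\in N_i$ minimality forces $\Supp(C(X))=X$, exhibiting a dimension-$i$ subcode of support $X$. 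Therefore $d_i=\min\{|X|:X\in N_i\}=\min\{j:\beta_{i,j}\neq 0\}$.
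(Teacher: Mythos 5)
The paper does not actually prove this theorem: it is imported wholesale from \cite[Theorem 1]{Johnsen13}, with the remark after the statement attributing the equivalence $\beta_{i,X}\neq 0\Leftrightarrow X\in N_i$ to Stanley's computation of these Betti numbers via M\"obius numbers of the lattice of cycles. Your proposal supplies a genuine proof, and its skeleton --- Hochster's formula, Cohen--Macaulayness of the restricted matroid complex forcing homology into top degree $r(X)-1$, the cone-point argument for coloops, and the identification of inclusion-minimality in $\mathcal{N}_i$ with coloop-freeness of $M_C|_X$ --- is exactly the route taken in the cited source, so this is a faithful reconstruction rather than a new argument. The final paragraph on $d_i$ likewise reproduces the content of Lemma~\ref{lem:techn} of the present paper. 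All of these steps are correct.

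The one point you should repair is the justification of the key nonvanishing direction. You assert that the top reduced Betti number of a matroid independence complex ``equals the number of bases containing no broken circuit.'' That count is the nbc-basis number $T(M;1,0)$, which is positive precisely when $M$ is loop-free, and it governs the homology of the order complex of the lattice of flats (equivalently the Orlik--Solomon algebra), not of the independence complex. The invariant you actually need is $T(M;0,1)$, the number of bases of internal activity zero: by Bj\"orner's shelling of $IN(M)$ the complex is homotopy equivalent to a wedge of $T(M;0,1)$ spheres of dimension $r(M)-1$, and $T(M;0,1)>0$ exactly when $M$ has no coloop. (Already $U_{1,3}$ separates the two counts: $T(0,1)=2$ while there is only one nbc-basis.) Your stated positivity criterion --- ``positive precisely when every element lies on a circuit'' --- is the correct one for $T(M;0,1)$, so the conclusion and the rest of the argument stand once the reference is corrected; but as written the sentence conflates two different Tutte evaluations, and the nbc count would give the wrong vanishing criterion (loops rather than coloops) in general.
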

\begin{rem}
The fact that $\beta_{i,X} \ne 0$ if and only $X \in N_i$ is a consequence of the considerations on \cite[page 59]{stanley77}, where one also relates these Betti numbers to M\"obius numbers of related lattices of cycles.
\end{rem}
%%%%%%%%%%%%%%%%%%%%%%%%%%%%%%%%%%%%%%%%%%%%%%%%%%%%
%
%  GREEDY WEIGHTS OF MATROIDS
%
%%%%%%%%%%%%%%%%%%%%%%%%%%%%%%%%%%%%%%%%%%%%%%%%%%%%

\section{Greedy weights for matroids} \label{important}

We will now give definitions for greedy weights for matroids, and later show that greedy weights for linear codes and their associated matroids coincide. First, recall the definition for generalized Hamming weights for matroids, given in~\cite{Larsen05}:
\begin{defn}\label{def:hammingM} Let $M$ be a matroid of rank $n-k$ on a set of cardinality $n$. For $1 \leqslant r \leqslant n-k,$
\[d_r= \min\{|\sigma|:\ \sigma \in \mathcal{N}_r\} = \min\{|\sigma|:\ \sigma \in N_r\}.\]
\end{defn}

\begin{defn}\label{def:greedyM}
Let $M$ be a matroid on $n$ elements of rank $n-k$. Let $\Sigma$ be the set \[\Sigma =\left\{ (\sigma_1,\cdots,\sigma_k) \in \mathcal{N}_1\times\cdots\times \mathcal{N}_k\vert\ \sigma_1 \subsetneq \cdots \subsetneq \sigma_k\right\}.\] Let $\overline{\Sigma}$ be the set \[\overline{\Sigma} = \left\{e(S)=(|\sigma_1|,\cdots,|\sigma_k|):\ S=(\sigma_1,\cdots,\sigma_k) \in \Sigma\right\}.\] Then the (bottom-up) greedy weights $(e_1,\cdots,e_k)$ of $M$ are the \[(e_1,\cdots,e_k) = \min_{lex} \overline{\Sigma}\] while the top-down greedy weights $(\tilde{e}_1,\cdots,\tilde{e}_k)$ of $M$ are \[(\tilde{e}_1,\cdots,\tilde{e}_k) = \min_{revlex} \overline{\Sigma},\] where \emph{lex} and \emph{revlex} are the lexicographic and reverse lexicographic orders respectively.
\end{defn}

If $S=(\sigma_1,\cdots,\sigma_k) \in \mathcal{N}_1\times\cdots\times\mathcal{N}_k$ is such that $e(S)=(e_1,\cdots,e_k)$ (resp. $(\tilde{e}_1,\cdots,\tilde{e}_k)$), we say that $\sigma_i$ \emph{computes} $e_i$ (resp. $\tilde{e}_i$).

\begin{defn}\label{def:CEZM} Let $M$ be a matroid of rank $n-k$ on a set of cardinality $n$, and let $(d_1,\cdots,d_k)$ be its generalized Hamming weights. The CEZ greedy weights $(g_1,\cdots,g_k)$ are defined as follows: \[g_1=d_1\] and for $2\leqslant r \leqslant k$,
 \[g_r = \min \{|\sigma|:\ \sigma \in \mathcal{N}_r \textrm{ and } \exists \tau \in \mathcal{N}_{r-1} \textrm{ such that } \tau \subset \sigma \textrm{ and }|\tau| = d_{r-1}\}.\] 
\end{defn}

We say that $\sigma \in \mathcal{N}_i$ \emph{computes} $g_i$ if it satisfies the conditions in the definition.

\begin{exa}\label{exa:code}
Let $C$ be the $[8,4]$-linear code over $\mathbb{F}_3$ defined by the generator matrix \[G=\begin{bmatrix}
1&0&1&1&0&0&0&0\\
0&1&1&1&0&0&0&0\\
0&0&0&0&1&1&1&0\\
0&0&0&0&1&2&0&1
\end{bmatrix}.\] Its weights are
\[(d_1,d_2,d_3,d_4)=(2,4,6,8),\] \[(e_1,e_2,e_3,e_4)=(g_1,g_2,g_3,g_4) = (2,4,7,8)\] and \[(\tilde{e}_1,\tilde{e}_2,\tilde{e}_3,\tilde{e}_4,)=(3,4,6,8).\]
\end{exa}

As a consequence of the unique rank increase of the nullity function, both the bottom up and the top down greedy weights are strictly increasing sequences. The CEZ greedy weights $g_i$ are not necessary monotonous, as the following example shows.

\begin{example} \label{nonincreasing}
Let $M$ on $E=\{1,\cdots,23\}$ whose circuits are the following: all the subsets of $\{13,\cdots,23\}$ of cardinality $9$ together with $\{1,\cdots,8\}$, $\{5,\cdots,12\}$ and $\{1,2,3,4,9,10,11,12\}$. This is a matroid of rank $18$. Then, \[(d_1,d_2,d_3,d_4,d_5) = (8,10,11,19,23),\]\[(e_1,e_2,e_3,e_4,e_5) = (8,12,21,12,23),\]\[(\tilde{e}_1,\tilde{e}_2,\tilde{e}_3,\tilde{e}_4,\tilde{e}_5) = (10,11,12,19,23),\]\[(g_1,g_2,g_3,g_4,g_5) = (8,12,11,19,23).\]
\end{example}

In Definitions~\ref{def:greedyM} and ~\ref{def:CEZM}, we could actually have asked the subsets to be in $N_i$, not just $\mathcal{N}_i$, as the following proposition shows:

\begin{prop}\label{prop:2defs}
Let $M$ be a matroid of rank $n-k$ on a set of cardinality $n$. Let $\Sigma'$ be the set \[\Sigma' =\left\{ (\sigma_1,\cdots,\sigma_k):\ \sigma_1 \subsetneq \cdots \subsetneq \sigma_k \textrm{ and }\sigma_i \in N_i\textrm{, }\ \forall i\right\}.\]Then we have the following:
\[(e_1,\cdots,e_k)=\min_{lex}\{e(S):\ S \in \Sigma'\},\]
\[(\tilde{e}_1,\cdots,\tilde{e}_k)=\min_{revlex}\{e(S):\ S \in \Sigma'\},\] and for all $2\leqslant i \leqslant k$,
 \[g_i=\min\{|\sigma|:\ \sigma \in N_i \textrm{ and } \exists \tau \in N_{i-1} \textrm{ such that } \tau \subset \sigma \textrm{ and }|\tau| = d_{i-1}\}.\] 
\end{prop}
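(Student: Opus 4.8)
The plan is to reduce all three assertions to a single one-step replacement lemma. Since $N_i \subseteq \mathcal{N}_i$, one half of each identity is automatic: restricting any of the minimizations to the smaller index set $N_i$ can only increase the value computed, be it for $\min_{lex}$, for $\min_{revlex}$, or for the numerical minimum defining $g_i$. All the content lies in the reverse inequality, namely that every admissible configuration built from the $\mathcal{N}_i$ can be replaced by one built from the $N_i$ whose cardinalities are no larger in each coordinate. I would exploit the elementary fact that coordinate-wise domination implies domination both for $\min_{lex}$ and for $\min_{revlex}$ (at the largest index where two vectors differ, the smaller-everywhere vector is strictly smaller), so that a single replacement settles the bottom-up and top-down statements at once.

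The crux is the following claim: if $\tau \in N_{i-1}$ and $\tau \subseteq \sigma$ with $\sigma \in \mathcal{N}_i$, then there is a cycle $\sigma^* \in N_i$ with $\tau \subsetneq \sigma^* \subseteq \sigma$. I would choose $\sigma^*$ inclusion-minimal among all $X$ with $\tau \subseteq X \subseteq \sigma$ and $n(X)=i$ (the family is nonempty, containing $\sigma$), and then show such a $\sigma^*$ is automatically a global cycle. Indeed, if $\sigma^*$ were not inclusion-minimal in $\mathcal{N}_i$, there would exist $Y \subsetneq \sigma^*$ with $n(Y)=i$; applying supermodularity of the nullity to $Y$ and $\tau$, together with the monotonicity bounds $n(Y \cup \tau) \leqslant n(\sigma^*) = i$ and $n(Y \cap \tau) \leqslant n(\tau) = i-1$, forces equality throughout, so $n(Y \cup \tau)=i$ and $n(Y \cap \tau)=i-1$. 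Minimality of $\sigma^*$ then gives $Y \cup \tau = \sigma^*$, whence $\tau \not\subseteq Y$, so $Y \cap \tau \subsetneq \tau$ is a proper subset of $\tau$ of nullity $i-1$, contradicting $\tau \in N_{i-1}$. I expect this supermodularity-plus-minimality step to be the main obstacle, since it is the only place the cycle structure is used in an essential way.

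Granting the lemma, the three assertions follow quickly. For the chain statements I would build the cycle chain inductively: take $\sigma_1'$ to be a circuit inside $\sigma_1$, so $\sigma_1' \in N_1 = \mathcal{C}$; and, having produced $\sigma_{i-1}' \in N_{i-1}$ with $\sigma_{i-1}' \subseteq \sigma_{i-1} \subsetneq \sigma_i$, apply the lemma with $\tau = \sigma_{i-1}'$ and $\sigma = \sigma_i$ to obtain $\sigma_i' \in N_i$ with $\sigma_{i-1}' \subsetneq \sigma_i' \subseteq \sigma_i$. This converts any chain of $\Sigma$ into a chain $S' \in \Sigma'$ with $|\sigma_i'| \leqslant |\sigma_i|$ for all $i$, and coordinate-wise domination then yields both $(e_1,\cdots,e_k) = \min_{lex}\{e(S):\ S \in \Sigma'\}$ and $(\tilde{e}_1,\cdots,\tilde{e}_k) = \min_{revlex}\{e(S):\ S \in \Sigma'\}$. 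For the CEZ weights I would first observe that any $\tau \in \mathcal{N}_{i-1}$ with $|\tau| = d_{i-1}$ is forced to be inclusion-minimal, hence lies in $N_{i-1}$, so the two conditions imposed on $\tau$ coincide; then, given a $\sigma \in \mathcal{N}_i$ witnessing the original $g_i$ together with such a $\tau$, the lemma produces $\sigma^* \in N_i$ with $\tau \subseteq \sigma^* \subseteq \sigma$ and $|\sigma^*| \leqslant |\sigma|$, so $\sigma^*$ witnesses the restricted minimum, giving the stated formula for $g_i$.
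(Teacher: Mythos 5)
Your proposal is correct, and it takes a mildly but genuinely different route from the paper's. The paper does not isolate a replacement lemma: for the lex and CEZ assertions it takes a minimizer over the larger family $\Sigma$ and shows that optimality already forces each $\sigma_i$ to lie in $N_i$ (lex-minimality is used to rule out $\sigma_{i-1}\subseteq\tau$ for a witness $\tau\subsetneq\sigma_i$ of non-minimality, after which supermodularity of the nullity applied to $\sigma_{i-1}$ and $\sigma_i\setminus\{x\}$ gives $2i-2\geqslant 2i-1$); for the revlex assertion it uses a separate, easier argument, replacing a non-minimal $\sigma_i$ by some $\tau_i\in N_i$ and rebuilding the lower part of the chain arbitrarily via unit rank increase. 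Your single lemma --- an inclusion-minimal set of nullity $i$ in the interval between some $\tau\in N_{i-1}$ and $\sigma$ is automatically in $N_i$ --- is the same supermodularity-plus-minimality mechanism (your pair $(Y,\tau)$ plays the role of the paper's $(\sigma_i\setminus\{x\},\sigma_{i-1})$), but decoupled from any optimality hypothesis, so it applies to an arbitrary chain and produces a coordinate-wise dominating chain in $\Sigma'$. That buys a uniform treatment of all three assertions, and in fact a slightly stronger statement than the proposition itself, at the modest cost of running the inductive construction of $S'$ and of checking explicitly that any $\tau\in\mathcal{N}_{i-1}$ of cardinality $d_{i-1}$ lies in $N_{i-1}$ --- a point the paper's CEZ case leaves implicit when it reuses the lex argument. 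Both proofs ultimately rest on the same two facts: supermodularity of $n$ and inclusion-minimality at level $i-1$.
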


\begin{proof}
The first and third assertions rely on the same observation. We will thus only treat the first assertion. It is clear that \[\min_{lex}\{e(S):\ S \in \Sigma'\} \geqslant_{lex} (e_1,\cdots,e_k).\] Now, let $S=(\sigma_1,\cdots,\sigma_k) \in \Sigma$ such that \[e(S)=\min_{lex}\{e(S):\ S \in \Sigma'\}.\] We will show that $\sigma_i \in N_i$ for all $i$. If not, let $i$ be the smallest index for which this is not true. By Definition~\ref{def:hammingM}, $i >1$. Since $\sigma_i \not\in N_i$, this means that there exists $\tau \subsetneq \sigma_i$ such that $n(\tau)=i$. Obviously, $\sigma_{i-1} \not\subset \tau$ otherwise, replacing $\sigma_i$ by $\tau$ in the sequence $S$, we would get a chain of sets that would contradict the minimality of $e(S)$ for the \emph{lex} ordering. Thus, we can find $x \in \sigma_{i-1} - \tau$. Without loss of generality, we can suppose that $\tau=\sigma_i-\{x\}$. Consider then $\rho=\sigma_{i-1}-\{x\}$. By minimality of $\sigma_{i-1}$ in the set of subsets with nullity $i-1$, and by the unique rank increase property of $n$, $n(\rho) = i-2$. Then we have, by the inequality after Definition~\ref{def:nullity} satisfied by the nullity function: \[2i-2= n(\rho ) + n(\sigma_i) = n(\sigma_{i-1}\cap \tau) + n(\sigma_{i-1} \cup \tau) \geqslant n(\sigma_{i-1}) + n(\tau) = 2i-1,\] which is absurd.
Thus, all elements in $S$ are in $N_i$, and the first assertion is proved.\\
The second assertion is easier to prove since we don't have any bottom constraints. Again, it is clear that \[\min_{revlex}\{e(S):\ S \in \Sigma'\} \geqslant_{lex} (\tilde{e}_1,\cdots,\tilde{e}_k).\] For the contrary, let  $S=(\sigma_1,\cdots,\sigma_k) \in \Sigma'$ such that \[e(S)=\min_{revlex}\{e(S):\ S \in \Sigma\}.\] Assume that there exists an index $i$ such that $\sigma_i \not \in  N_i$. Let $\tau_i \subsetneq \sigma_i$ such that $\tau_i \in N_i$, and take recursively for $j<i$ any $\tau_j \subset \tau_{j+1}$ such that $n(\tau_j)=j$. 
This can always be done by the unique rank increase property of $n$. Then the sequence $S'=(\tau_1,\cdots,\tau_i,\sigma_{i+1},\cdots,\sigma_k) \in \Sigma$, and by construction, \[e(S') <_{revlex} e(S),\] which is absurd. 
This in turn shows that \[\min_{revlex}\{e(S):\ S \in \Sigma'\} \leqslant_{lex} (\tilde{e}_1,\cdots,\tilde{e}_k).\]
\qed \end{proof}

\begin{rem}
The set $\Sigma$ appearing in Proposition \ref{prop:2defs} is the set of maximal chains in the poset of cycles for the matroid. Taking complements, this is the poset of flats of the dual matroid. If $d^{\perp} \ge 3,$ then this poset is a geometric lattice with atoms of cardinality $1$. Then the cardinalities $c_f$ of the flats, and hence all the cardinalities $n-c_f$ of the cycles $\sigma$ of the matroid, can be given a purely lattice-theoretical interpretation in terms of atoms.  Hence it is possible to reformulate Proposition \ref{prop:2defs} by lattice-theoretical invariants.
\end{rem}

\begin{cor}\label{cor:greedyamongbetti}
Let $M$ be a matroid of rank $n-k$ on a set of cardinality $n$. For $1 \leqslant i \leqslant k$, \[X \subset E \textrm{ is a (top-down, bottom-up, CEZ) }i\textrm{-greedy subcode} \Rightarrow \beta_{i,X} \neq 0\] and \[g_i, e_i,\tilde{e}_i \in \{j \vert \beta_{i,j} \neq 0\}.\]
\end{cor}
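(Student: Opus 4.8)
The plan is to reduce everything to Theorem~\ref{thm:AAECC}, which already tells us that $\beta_{i,X}\ne 0$ if and only if $X\in N_i$, and that $d_i=\min\{j:\beta_{i,j}\ne 0\}$. So the whole corollary is really a statement that the greedy subcodes computing $g_i$, $e_i$, $\tilde e_i$ always lie in $N_i$, together with a translation of that fact into the language of Betti numbers.

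First I would establish the implication about $i$-greedy subcodes. For the bottom-up and top-down cases, Proposition~\ref{prop:2defs} is exactly what is needed: it says that the minima defining $(e_1,\dots,e_k)$ and $(\tilde e_1,\dots,\tilde e_k)$ can be computed over chains $S\in\Sigma'$, i.e.\ chains all of whose members $\sigma_i$ lie in $N_i$. Hence if $X=\sigma_i$ computes $e_i$ (resp.\ $\tilde e_i$) as part of such an optimal chain, then $X\in N_i$. For the CEZ case the third assertion of Proposition~\ref{prop:2defs} gives the same conclusion: the defining minimum for $g_i$ may be taken over $\sigma\in N_i$, so any $\sigma$ computing $g_i$ can be chosen in $N_i$. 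In every case $X\in N_i$, and by the second sentence of Theorem~\ref{thm:AAECC} this is equivalent to $\beta_{i,X}\ne 0$. This proves the first displayed implication.

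Next I would deduce the second statement, that $g_i,e_i,\tilde e_i\in\{j:\beta_{i,j}\ne 0\}$. By the implication just proved, each of these weights equals $|X|$ for some $X\in N_i$. By Theorem~\ref{thm:AAECC}, $\beta_{i,j}\ne 0$ if and only if there is a member of $N_i$ of cardinality $j$; taking $j=|X|$ with $X\in N_i$ the appropriate computing set shows $\beta_{i,|X|}\ne 0$, i.e.\ the weight lies in $\{j:\beta_{i,j}\ne 0\}$. This is immediate once the membership $X\in N_i$ is in hand.

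I do not expect any genuine obstacle here, since Proposition~\ref{prop:2defs} does the real work of showing the computing sets can be taken in $N_i$; the one point to be careful about is bookkeeping in the bottom-up and top-down cases. There the weights are defined via $\min_{lex}$ (resp.\ $\min_{revlex}$) of a \emph{whole} tuple $e(S)$, so I should note that if some chain realizes the optimal tuple over $\Sigma$, Proposition~\ref{prop:2defs} guarantees the \emph{same} optimal tuple is realized by a chain in $\Sigma'$, and then each coordinate $\sigma_i$ of that chain lies in $N_i$ by construction. Thus whichever $X$ is called an $i$-greedy subcode computes the corresponding coordinate as part of an optimal chain, and the argument applies uniformly; the CEZ case needs no chain at all, only the single-set reformulation.
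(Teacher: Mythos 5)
Your argument is correct and is essentially the paper's own proof: the paper likewise observes that the proof of Proposition~\ref{prop:2defs} shows every set computing a greedy weight is a cycle (lies in $N_i$), and then invokes Theorem~\ref{thm:AAECC} to translate membership in $N_i$ into nonvanishing of $\beta_{i,X}$ and $\beta_{i,j}$. The only point worth stating slightly more carefully is that the proof of Proposition~\ref{prop:2defs} in fact shows that \emph{every} optimal chain (not merely some optimal chain) has all its members in $N_i$, which is what the universally quantified implication in the corollary requires.
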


\begin{proof}
In the proof above, we showed that any subset that computes a greedy-weight is a cycle. This is then a direct consequence of Theorem~\ref{thm:AAECC}.
\qed \end{proof}

\subsection{Wei duality of greedy weights}

If $M$ is a matroid, then it is proved in~\cite{Larsen05} that the weight hierarchy of the matroid and its dual satisfy Wei duality, that is \[\{d_1,\cdots,d_k\} \cup \{n+1-\overline{d}_1,\cdots,n+1-\overline{d}_{n-k}\} = \{1,\cdots,n\},\] 
where $\overline{d}_i$ denotes the $i$-th generalized Hamming weight of $\overline{M}$. This result is a generalization of duality for linear codes proved by Wei (\cite{Wei1991}). In his doctoral thesis (\cite{Schaathun01phd}), Schaathun proves a Wei duality for greedy weights for linear codes, namely that \[\{e_1,\cdots,e_k\} \cup \{n+1-\overline{\tilde{e}}_1,\cdots,n+1-\overline{\tilde{e}}_{n-k}\} = \{1,\cdots,n\}.\] In this section, we will prove that his result extends to matroids. Before doing so, if $S=(\sigma_1,\cdots,\sigma_k) \in \Sigma$, we define $\delta(S)$ in the following (not unique) way: consider a maximal chain \[\emptyset \subsetneq \rho_1 \subsetneq \cdots \subsetneq \rho_n=E\] that contains all the $E-\sigma_i$ for $1\leqslant i\leqslant k$. Obviously, we have $|\rho_i|=i$ for every $1\leqslant i\leqslant n$. Then $\delta(S)$ is the chain $\tau_1\subsetneq\cdots\subsetneq\tau_{n-k}$ obtained by removing all the subsets of cardinality $n-|\sigma_i|+1$. Even if this is not uniquely defined, the set $\overline{\delta}(S) = \{|\tau_1|,\cdots,|\tau_{n-k}|\}$ is, since we have \[\overline{\delta}(S) = E-\{n+1-|\sigma_i|: \ 1\leqslant i \leqslant k\}.\] In particular, we have, with a slight abuse of notation, \[\overline{\delta}\delta S = e(S)=\{|\sigma_1|,\cdots,|\sigma_k|\}.\] 
Denote by $\overline{n}$ the nullity function of $\overline{M}$.
\begin{lemma}\label{lem:Wei1}
Let $S=(\sigma_1,\cdots,\sigma_k)$ be a tower that computes the bottom up greedy weights of $M$, and let $\delta(S)=(\tau_1\cdots,\tau_{n-k})$. Then for all $1\leqslant i \leqslant n-k$, \[\overline{n}(\tau_i)=i.\]
\end{lemma}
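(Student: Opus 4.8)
The plan is to track the dual nullity $\overline{n}$ along the maximal chain $\emptyset = \rho_0 \subsetneq \rho_1 \subsetneq \cdots \subsetneq \rho_n = E$ used to define $\delta(S)$, and to show that the $k$ steps at which $\overline{n}$ fails to increase are exactly the $k$ positions deleted in passing from the $\rho_j$ to the $\tau_i$. First I would rewrite the dual nullity in a usable form: from Definitions~\ref{def:dual} and~\ref{def:nullity} one gets, for every $X \subset E$,
\[\overline{n}(X) = |X| - \overline{r}(X) = r(E) - r(E \setminus X) = (n-k) - r(E \setminus X).\]
In particular $\overline{n}(\emptyset) = 0$ and $\overline{n}(E) = n-k$, and as $j$ increases by one the set $E \setminus \rho_j$ loses a single element, so by unit rank increase of $r$ the value $\overline{n}(\rho_j)$ either stays constant or grows by $1$. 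Since it climbs from $0$ to $n-k$ in $n$ steps, there are exactly $n-k$ steps of size $+1$ and exactly $k$ steps of size $0$.

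The key step is to locate these $k$ constant steps. Writing $c_i = |\sigma_i|$, the complement $E \setminus \sigma_i$ is the chain member of cardinality $n - c_i$, say $\rho_{n-c_i}$, and $\delta(S)$ is obtained by deleting the members of cardinality $n - c_i + 1$, i.e. the sets $\rho_{n-c_i+1}$. For the step from $\rho_{n-c_i}$ to $\rho_{n-c_i+1}$, letting $x$ be the added element (which lies in $\sigma_i$), the formula above gives
\[\overline{n}(\rho_{n-c_i+1}) - \overline{n}(\rho_{n-c_i}) = r(\sigma_i) - r(\sigma_i \setminus \{x\}).\]
Here I would invoke that $S$ computes the bottom-up greedy weights, so that $\sigma_i \in N_i$ by the proof of Proposition~\ref{prop:2defs} (equivalently Corollary~\ref{cor:greedyamongbetti}). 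Since $\sigma_i$ is inclusion-minimal of nullity $i$, every proper subset has nullity at most $i-1$, while unit rank increase of $n$ forces $n(\sigma_i \setminus \{x\}) \geqslant i-1$; hence $n(\sigma_i \setminus \{x\}) = i-1$ and $r(\sigma_i \setminus \{x\}) = r(\sigma_i)$, so this step is constant.

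To finish, note that the $c_i$ are strictly increasing (as $\sigma_1 \subsetneq \cdots \subsetneq \sigma_k$), so the $k$ positions $n - c_i + 1$ are distinct. Thus I have exhibited $k$ distinct constant steps, and there are exactly $k$ constant steps in all, so these are precisely the deleted positions; equivalently, every retained position is a $+1$ step. Therefore $\overline{n}(\rho_j)$ equals the number of retained positions that are $\leqslant j$, and if $\rho_j = \tau_i$ this count is $i$, giving $\overline{n}(\tau_i) = i$ as required. I expect the only delicate point to be the justification that the components $\sigma_i$ of a greedy-weight-computing tower are genuine cycles (members of $N_i$), which is exactly what powers the constant-step computation; the remaining bookkeeping — the global count of $+1$ versus $0$ steps and the matching of deleted cardinalities to constant steps — is then purely combinatorial.
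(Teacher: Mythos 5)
Your proof is correct, but its key mechanism is genuinely different from the paper's. Both arguments track the dual nullity $\overline{n}$ along the maximal chain $\emptyset=\rho_0\subsetneq\cdots\subsetneq\rho_n=E$ and exploit unit rank increase, so the heart of the matter is locating the $k$ constant steps. The paper works block by block: for the segment between $E\setminus\sigma_{i+1}$ and $E\setminus\sigma_i$ it computes $\overline{n}$ at the two endpoints via the duality formula, concludes that the block contains exactly one constant step, and then pins that step at the bottom of the block by a contradiction with lex-minimality of $S$: if the first step were a $+1$ step, then $E\setminus\rho_{j+1}$ would have nullity $i+1$ and sit strictly between $\sigma_i$ and $\sigma_{i+1}$, so substituting it for $\sigma_{i+1}$ would give a lex-smaller tower. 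You instead invoke lex-minimality only once and indirectly, through Proposition~\ref{prop:2defs} (equivalently Corollary~\ref{cor:greedyamongbetti}), to get $\sigma_i\in N_i$, and then locate the constant steps by the purely local identity $r(\sigma_i)-r(\sigma_i\setminus\{x\})=0$, valid for \emph{any} cycle, finishing with a global count ($k$ constant steps in total, $k$ distinct ones exhibited at the deleted positions). This modular route proves a strictly stronger statement: the conclusion holds for every tower in $\Sigma'$, i.e.\ every chain of nested cycles, not only for towers computing the greedy weights --- a generality that is actually convenient elsewhere (e.g.\ when $\delta$ is applied to other towers in the proof of Theorem~\ref{firstmain}). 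What the paper's route buys is self-containedness: it never needs the fact that greedy towers consist of cycles, since the lex-minimality contradiction runs directly inside the proof. One harmless discrepancy: the paper's displayed values $\overline{n}(E\setminus\sigma_{i+1})=k+i+1-|\sigma_{i+1}|$ tacitly use the convention $r(E)=k$, which clashes with towers of length $k$; your formula $\overline{n}(X)=r(E)-r(E\setminus X)$ with $r(E)=n-k$ is the consistent one, and the discrepancy affects neither argument.
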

\begin{proof}
Using the notation from the definition of $\delta(S)$, we have for every $i$ the chain \[E- \sigma_{i+1}= \rho_j \subsetneq \cdots \subsetneq \rho_{j+s} = E-\sigma_i\] where $j= n-|\sigma_{i+1}|$ and $s=|\sigma_{i+1}|-|\sigma_i|$. From the duality formula for the rank functions and nullity functions, we get that, since $n(\sigma_t)=t$, \[\overline{n}(E-\sigma_{i+1}) = k+i+1-|\sigma_{i+1}|\] while \[\overline{n}(E-\sigma_i)=k+i-|\sigma_i|.\] Since $\overline{n}$ is unit rank increase, this means that all $\overline{n}(\rho_{j+t})$ are distinct, except for $2$ of them, and that they span the set $\{k+i+1-|\sigma_{i+1}|,\cdots,k+i-|\sigma_i|\}.$ We show now that $\overline{n}(\rho_j)=\overline{n}(\rho_{j+1})$. Since both set differ by just $1$ element, we have either $\overline{n}(\rho_j)=\overline{n}(\rho_{j+1})$ or $\overline{n}(\rho_j)=\overline{n}(\rho_{j+1})-1$. Suppose the latter occurs. Then, \[n(\sigma_{i+1})=n(E-\rho_j) = n-k-|\rho_j|+\overline{n}(\rho_j) = n(E-\rho_{j+1}).\] Since \[\sigma_i \subsetneq E-\rho_{j+1} \subsetneq E-\rho_j = \sigma_{i+1}\] (the first strict inclusion coming from the fact that $n(\sigma_i)=n(\sigma_{i+1})-1 = n(E-\rho_{j+1})-1$), the tower \[\sigma_1\subsetneq \cdots\subsetneq \sigma_i \subsetneq E-\rho_{j+1} \subsetneq \sigma_{i+2} \cdots\subsetneq \sigma_k \in \Sigma\] and the $k$-tuple formed by the cardinalities of the elements of the tower is strictly lower for the \emph{lex} order than $(e_1,\cdots,e_k)$ which is absurd.
\qed \end{proof}

\begin{lemma}\label{lem:Wei2} Let $S,S' \in \Sigma$. Then \[e(S) <_{lex} e(S') \Leftrightarrow e(\delta(S))<_{revlex} e(\delta(S')).\]
\end{lemma}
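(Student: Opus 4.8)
The lemma asserts a duality between the lex order on cycle-towers and the revlex order on the dual flag produced by $\delta$. This is the combinatorial heart of the Wei duality for greedy weights. My understanding is that $\overline{\delta}(S)$ records exactly the complement set $\{1,\dots,n\}\setminus\{n+1-|\sigma_i|\}$, so the two size-sequences $e(S)$ and $e(\delta(S))$ are determined by complementary subsets of $\{1,\dots,n\}$.

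The plan is to notice that both $e(S)$ and $e(\delta(S))$ depend only on the set of cardinalities $A=\{|\sigma_1|,\dots,|\sigma_k|\}$, so that the lemma reduces to a purely combinatorial statement about subsets of $\{1,\dots,n\}$. Indeed, since $\sigma_1\subsetneq\dots\subsetneq\sigma_k$ the cardinalities $|\sigma_i|$ are strictly increasing, so $e(S)$ is simply $A$ listed in increasing order; and by the identity $\overline{\delta}(S)=E-\{n+1-|\sigma_i|\}$ recorded before the lemma, the set $B:=\{|\tau_1|,\dots,|\tau_{n-k}|\}$ is the complement in $\{1,\dots,n\}$ of $\{n+1-a:a\in A\}$, with $e(\delta(S))$ its increasing listing. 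Writing $A',B'$ for the sets attached to $S'$, the assertion becomes: for $k$-subsets $A,A'$ with associated $(n-k)$-subsets $B,B'$, one has $A<_{lex}A'$ if and only if $B<_{revlex}B'$, where every set is compared through its increasing listing.

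First I would pass to characteristic vectors. For $X\subseteq\{1,\dots,n\}$ let $\chi_X\in\{0,1\}^n$ be given by $\chi_X(i)=1$ exactly when $i\in X$. From $i\in B \Leftrightarrow n+1-i\notin A$ one gets $\chi_B(i)=1-\chi_A(n+1-i)$; that is, $\chi_B$ arises from $\chi_A$ by reversing the coordinates and swapping $0\leftrightarrow 1$. I would then establish two elementary characterizations, each obtained by matching an increasing listing with the positions of the $1$'s of the characteristic vector: (i) for $k$-subsets, $A<_{lex}A'$ holds precisely when, at the \emph{smallest} coordinate $p$ with $\chi_A(p)\neq\chi_{A'}(p)$, we have $\chi_A(p)=1$ and $\chi_{A'}(p)=0$; and (ii) for $(n-k)$-subsets, $B<_{revlex}B'$ holds precisely when, at the \emph{largest} coordinate $q$ with $\chi_B(q)\neq\chi_{B'}(q)$, we have $\chi_B(q)=0$ and $\chi_{B'}(q)=1$.

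Finally I would combine the two. Under the coordinate reversal $i\mapsto n+1-i$, the largest coordinate $q$ at which $\chi_B$ and $\chi_{B'}$ disagree corresponds, through $\chi_B(i)=1-\chi_A(n+1-i)$, to the smallest coordinate $p=n+1-q$ at which $\chi_A$ and $\chi_{A'}$ disagree, and the condition $\chi_B(q)=0,\ \chi_{B'}(q)=1$ translates verbatim into $\chi_A(p)=1,\ \chi_{A'}(p)=0$. Thus the criterion (ii) for $B,B'$ is literally the criterion (i) for $A,A'$, and the desired equivalence follows at once (the degenerate case $A=A'$, where $B=B'$, being trivial).

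The genuine content here is small: I expect the only real obstacle to be bookkeeping, namely keeping the lexicographic and reverse-lexicographic conventions aligned and verifying the two characterizations (i) and (ii). The point that makes everything collapse is that the passage from lex to revlex is exactly what the coordinate reversal together with the bit-flip $0\leftrightarrow 1$ encodes; once this is seen, no inequality manipulation on the nullity function is needed, in contrast to Lemma~\ref{lem:Wei1}.
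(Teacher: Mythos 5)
Your proof is correct and follows essentially the same route as the paper: both arguments reduce the statement, via the identity $\overline{\delta}(S)=\{1,\dots,n\}\setminus\{n+1-|\sigma_i|\}$ recorded just before the lemma, to a purely combinatorial fact about complementary subsets under $a\mapsto n+1-a$, and then locate the first (resp.\ last) position of disagreement. Your characteristic-vector packaging is a clean way to organize the index bookkeeping and has the minor advantage of giving both implications symmetrically, where the paper proves one direction by an explicit index computation and obtains the converse from $e\delta\delta(S)=e(S)$.
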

\begin{proof}
Write $S=(\sigma_1,\cdots,\sigma_k)$, $S'=(\sigma'_1,\cdots,\sigma'_k)$, $\delta(S)=(\tau_1,\cdots,\tau_{n-k})$ and $\delta(S')=(\tau'_1,\cdots,\tau'_{n-k})$. By hypothesis, there exists an $1\leqslant i \leqslant k$ such that for all $1\leqslant j <i$, $|\sigma_j|=|\sigma'_j|$ while $|\sigma_i|<|\sigma'_i|$. In our definition of $\delta$ above (and we keep the notation, using $\rho_s$ and $\rho'_s$ for $S$ and $S'$ respectively) this means that for $l\geqslant n-|\sigma_i|-k+i+1$, \[|\tau_l|=|\tau'_l|\] while \[\left|\tau_{n-|\sigma_i|-k+i}\right|<\left|\tau'_{n-|\sigma_i|-k+i}\right|=n-|\sigma_i|+1,\] that is \[e(\delta(S))<_{revlex} e(\delta(S'))\]
 The other way is done in a similar way, noticing that $e\delta\delta(S)=\overline{\delta}\delta(S) = e(S)$.
\qed \end{proof}
We then obtain the following analogue of \cite[Theorem 10.2]{Schaathun01phd}, where one showed Wei duality for greedy weights of linear codes:
\begin{thm} \label{firstmain}
Let $M$ be a matroid of rank $k$ on a  ground set $E$ of cardinality $n$. Then \[\{e_1,\cdots,e_k\} \cup \{n+1-\overline{\tilde{e}}_1,\cdots,n+1-\overline{\tilde{e}}_{n-k}\} = \{1,\cdots,n\}.\]
\end{thm}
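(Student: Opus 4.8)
The plan is to prove that if $S=(\sigma_1,\dots,\sigma_k)$ computes the bottom-up greedy weights of $M$, then $\delta(S)$ computes the \emph{top-down} greedy weights of $\overline M$; the identity $\overline\delta(S)=\{1,\dots,n\}\setminus\{n+1-|\sigma_i|\}$ then converts this into the asserted partition. Write $\Sigma_M$ and $\Sigma_{\overline M}$ for the tower sets of $M$ and $\overline M$, and let $\delta_{\overline M}$ denote the map $\delta$ built from $\overline M$. Two bookkeeping facts drive everything: the cardinality tuple $e(\delta(S'))=\overline\delta(S')$ depends only on $e(S')$, and $\overline\delta$ is an involution on cardinality sets, $\overline\delta\,\overline\delta=\mathrm{id}$.

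First I would record that Lemma~\ref{lem:Wei1} holds verbatim for revlex-minimizers, not only lex-minimizers. The only use of minimality in its proof is to exclude a jump $\overline n(\rho_{j+1})=\overline n(\rho_j)+1$, and the excluded configuration yields the competing tower $(\sigma_1,\dots,\sigma_i,E-\rho_{j+1},\sigma_{i+2},\dots,\sigma_k)$, which differs from $S$ \emph{in the single coordinate} $i+1$, where its cardinality is $|\sigma_{i+1}|-1$. A one-coordinate decrease lowers a tuple for both the lex and the revlex order, so the same contradiction applies in either case. Hence: if $T\in\Sigma_{\overline M}$ computes the top-down greedy weights of $\overline M$, then $\delta_{\overline M}(T)\in\Sigma_M$.

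With this strengthening the argument is short. Let $S$ be lex-minimal in $\Sigma_M$, so $e(S)=(e_1,\dots,e_k)$, and let $T$ be revlex-minimal in $\Sigma_{\overline M}$, so $e(T)=(\overline{\tilde e}_1,\dots,\overline{\tilde e}_{n-k})$. By Lemma~\ref{lem:Wei1}, $\delta(S)\in\Sigma_{\overline M}$, whence $e(T)\le_{revlex}e(\delta(S))$. For the reverse inequality, the strengthened lemma gives $\delta_{\overline M}(T)\in\Sigma_M$, so $\overline\delta(e(T))=e(\delta_{\overline M}(T))$ is realized in $\Sigma_M$ and thus $e(S)\le_{lex}\overline\delta(e(T))$ by lex-minimality of $S$. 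Feeding the pair $S,\delta_{\overline M}(T)\in\Sigma_M$ into Lemma~\ref{lem:Wei2} (using that $e(\delta(\cdot))$ depends only on $e(\cdot)$ to upgrade the stated strict equivalence to a non-strict one) and simplifying $e(\delta\,\delta_{\overline M}(T))=\overline\delta\,\overline\delta(e(T))=e(T)$, this becomes $e(\delta(S))\le_{revlex}e(T)$. Combining the two inequalities, $e(\delta(S))=e(T)=(\overline{\tilde e}_1,\dots,\overline{\tilde e}_{n-k})$.

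Finally I would substitute $e(\delta(S))=\overline\delta(S)=\{1,\dots,n\}\setminus\{n+1-e_i\}$ into this equality, obtaining $\{\overline{\tilde e}_j\}=\{1,\dots,n\}\setminus\{n+1-e_i\}$; replacing each $\overline{\tilde e}_j$ by $n+1-\overline{\tilde e}_j$ gives $\{n+1-\overline{\tilde e}_j\}=\{1,\dots,n\}\setminus\{e_i\}$, which is precisely the claimed identity. The one genuinely new point, and the step I expect to be the crux, is the observation that the exchange in Lemma~\ref{lem:Wei1} is a single-coordinate move, so that lemma is simultaneously a statement about lex- and revlex-minimizers; once that is in place, the remainder is formal manipulation of $\delta$, $\overline\delta$, and the two lemmas.
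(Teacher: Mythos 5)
Your proof is correct and follows essentially the same route as the paper's: apply $\delta$ to a lex-minimizer of $M$, use Lemmas~\ref{lem:Wei1} and~\ref{lem:Wei2} to identify its cardinality tuple with $(\overline{\tilde{e}}_1,\cdots,\overline{\tilde{e}}_{n-k})$, and translate via $\overline{\delta}(S)=\{1,\cdots,n\}\setminus\{n+1-|\sigma_i|\}$. The one point where you improve on the paper is your explicit observation that the exchange in the proof of Lemma~\ref{lem:Wei1} is a single-coordinate decrease, hence the lemma also holds for revlex-minimizers; the paper's proof silently uses $\delta(T')\in\Sigma(M)$ for the revlex-minimizer $T'$ of $\overline{M}$ (both to invoke Lemma~\ref{lem:Wei2} and to write $e(\delta(T'))\geqslant_{lex}(e_1,\cdots,e_k)$) without justifying it, so your added step genuinely closes that gap.
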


\begin{proof}
Let $S\in \Sigma$ such that $e(S)=(e_1,\cdots,e_k)$. Consider $T=\delta(S)$. By Lemma~\ref{lem:Wei1}, we know that $T \in \Sigma(\overline{M})$, and thus \[e(T) \geqslant_{revlex} (\overline{\tilde{e}}_1,\cdots,\overline{\tilde{e}}_{n-k}).\] If this is not an equality, let $T' \in \Sigma(\overline{M})$ such that $e(T')=(\overline{\tilde{e}}_1,\cdots,\overline{\tilde{e}}_{n-k})$. Then by Lemma~\ref{lem:Wei2} and the fact that $e\delta\delta (T)=e(T)$, we get that \[e(S) >_{lex} e(\delta(T)) \geqslant_{lex} (e_1,\cdots,e_k)=e(S)\] which is absurd.
\qed \end{proof}

\subsection{Greedy weights of codes and matroids} 

In for example  \cite{chen1999weight},  \cite{Chen1999} \cite{Cohen1999}  \cite{Chen01}, \cite{Schaathun01}, \cite {Schaathun04}, \cite{Schaathun01phd},  \cite{Chen1999}, \cite{Chen04} and  \cite{Bai19} one describes and treats greedy weights of linear codes $C$ over finite fields in various ways. In this part, we will show that the greedy weights for codes and their associated matroids coincide. We start with some lemmas:

\begin{lemma}\label{lem:techn}
Let $C$ be a $[n,k]$-code, $M$ its associated matroid and $X \subset \{1,\cdots,n\}$. Consider the subcode  \[C(X) = \left\{w \in C\vert\ Supp(w) \subset X\right\} \subset C.\] Then \[\dim C(X)=n(X)=n(Supp(C(X))).\]  Moreover, \[Supp (C(X)) = X \Leftrightarrow X \in N_{n(X)}.\] 
\end{lemma}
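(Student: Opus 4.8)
The plan is to establish the statement in two parts, handling the dimension equalities first and then the support characterization. For the first claim, I would begin by analyzing the subcode $C(X)$ via the parity check matrix $H$ of $C$. Recall that a codeword $w$ lies in $C$ exactly when $H w^{t} = 0$, and the condition $\Supp(w) \subset X$ means $w$ is supported on the coordinates in $X$. Thus $C(X)$ is naturally identified with the kernel of the map given by the column submatrix $H_{X}$ acting on the $|X|$-dimensional coordinate space, so $\dim C(X) = |X| - \rank(H_{X}) = |X| - r(X) = n(X)$ by the definition of the matroid $M_{C}$ and of the nullity function. This gives the first equality directly.

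For the equality $n(X) = n(\Supp(C(X)))$, I would set $Y = \Supp(C(X))$ and observe that $Y \subset X$ by construction, so $C(Y) = C(X)$ since enlarging the allowed support from $Y$ to $X$ adds no new codewords (every word of $C(X)$ is already supported on $Y$). Applying the first equality to $Y$ then yields $n(Y) = \dim C(Y) = \dim C(X) = n(X)$, which is exactly what we want. This also shows $Y = \Supp(C(Y))$, a fact I will reuse.

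For the final equivalence, the forward direction is the cleaner one: if $\Supp(C(X)) = X$, then $n(X) = \dim C(X)$, and I must show $X$ is inclusion-minimal among sets of nullity $n(X)$, i.e. $X \in N_{n(X)}$. Suppose $Z \subsetneq X$ with $n(Z) = n(X)$; then $\dim C(Z) = n(Z) = n(X) = \dim C(X)$, and since $C(Z) \subset C(X)$ (smaller support constraint), these subcodes coincide, forcing $\Supp(C(X)) = \Supp(C(Z)) \subset Z \subsetneq X$, contradicting $\Supp(C(X)) = X$. For the converse, if $X \in N_{n(X)}$ but $\Supp(C(X)) = Y \subsetneq X$, then $Y$ has $n(Y) = n(X)$ by the second equality proved above, contradicting the inclusion-minimality of $X$ in $\mathcal{N}_{n(X)}$.

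I expect the main subtlety to be keeping the identification $C(X) \cong \ker H_{X}$ precise, since codewords supported on $X$ must be viewed as vectors in $\Fq^{|X|}$ (the non-$X$ coordinates being forced to zero), and ensuring that the rank function $r(X) = \rank(H_{X})$ of the associated matroid is invoked correctly. The rest of the argument is a clean interplay between the monotonicity $C(Z) \subset C(X)$ for $Z \subset X$ and the nullity/dimension bookkeeping; once the kernel identification is stated carefully, the minimality equivalence follows by the short contradiction arguments sketched above.
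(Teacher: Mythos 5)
Your proof is correct and follows essentially the same route as the paper: establish $\dim C(X)=n(X)$ by a kernel identification, deduce $n(\Supp(C(X)))=n(X)$ from $C(Y)=C(X)$ for $Y=\Supp(C(X))$, and prove the equivalence by the two short minimality arguments. The only (harmless) difference is in the first step, where you identify $C(X)$ with $\ker H_X$ directly, while the paper uses $C(X)=\ker G_{E\setminus X}$ together with rank--nullity and the rank formula for the dual matroid; your version is slightly more direct since the matroid is defined from $H$.
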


\begin{proof}
The first assertion is an easy consequence from the fact that $C(X) = \ker G_{E-X}$, and a rewriting of the rank-nullity theorem using the relation between the rank of the matroid and its dual. \\
From the previous assertion, the dimension of the relations between the columns of $H$ indexed $Supp(C(X))$ is $n(X)$, that is, \[n(Supp (C(X))) = n(X).\]
Finally, let $i=n(X)$. We have always $SuppC(X)) \subset X$. If $Supp C(X) \subsetneq X$, from what we have just seen, $n(Supp C(X))=i$, so that $X \not \in N_i$. Conversely, suppose that $X \not \in N_i$. Let $Y \subsetneq X$ in $N_i$, and consider the two subcodes $C(Y)$ and $C(X)$. Obviously, $C(Y) \subset C(X)$. By the first result of this lemma, they have the same dimension, so they have to be equal. Moreover, since $Y \in N_i$, $Supp (C(Y))=Y$. This shows that \[Supp(C(X)) = Supp C(Y) = Y \subsetneq X.\]
\qed \end{proof}

\begin{lemma}\label{lem:coden} Let $C$ be a $[n,k]$ linear code over $\Fq$. Let $1\leqslant i\leqslant k$ and let $D \subset C$ be a $i$-greedy subcode. Then \[Supp(D) \in \mathcal{N}_i.\]
\end{lemma}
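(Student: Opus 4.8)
The plan is to show that for an $i$-greedy subcode $D$ (in any of the three senses), its support has nullity exactly $i$ in the associated matroid $M$. The natural bridge is Lemma~\ref{lem:techn}, which tells us that $\dim C(X) = n(X)$ for every $X \subset \{1,\dots,n\}$, where $C(X)$ is the subcode of words supported on $X$. So the whole argument hinges on relating the abstract subcode $D$ to a subcode of the form $C(X)$ with $X = Supp(D)$. First I would set $X = Supp(D)$ and observe that $D \subset C(X)$ by definition of $C(X)$, so that $i = \dim D \leqslant \dim C(X) = n(X) = n(Supp(D))$ by Lemma~\ref{lem:techn}. This already gives one inequality for free.

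The harder direction is to show $n(Supp(D)) \leqslant i$, i.e. that $D$ cannot sit inside a strictly larger subcode sharing the same support. Here I expect I would need to use the \emph{greediness} of $D$, not merely that it has dimension $i$. The key point is that $D$ is built as an extension of an $(i-1)$-greedy (or $d_{i-1}$-computing) subcode $D'$ by adding a single dimension so as to minimize the weight. If $n(Supp(D))$ were strictly larger than $i$, then $\dim C(Supp(D)) > i$, meaning there is a subcode $D''$ with $D' \subset D \subsetneq D''$, all sharing the same support $Supp(D) = Supp(D'')$. In particular $D''$ would be a larger-dimensional subcode whose $i$-dimensional intermediate layers have support no bigger than $wt(D)$, contradicting the minimality built into the greedy construction — any intermediate $i$-dimensional subcode of $D''$ containing $D'$ has weight $\leqslant wt(Supp(D)) = wt(D)$, yet the greedy process would have been free to choose such a subcode and would have no reason to land exactly on $D$ unless the extension was already support-saturated.

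The main obstacle I anticipate is handling the three flavors of greedy subcode (bottom-up from Definition~\ref{def:bottomupsubcode}, top-down from Definition~\ref{def:topdownsubcode}, and CEZ from Definition~\ref{def:CEZsubcode}) uniformly. The cleanest route is probably to argue by contradiction in each case that if $Supp(D) \notin \mathcal{N}_i$ fails to be minimal, one could replace $D$ by a subcode of the same or smaller support achieving the same dimensional constraints, violating the defining minimality of the greedy step. For the bottom-up and CEZ cases the relevant constraint is containment of a fixed lower subcode ($D'$ or a $d_{i-1}$-computing subcode), and a support-preserving enlargement followed by passing to an appropriate $i$-dimensional intermediate layer produces the contradiction; for the top-down case one argues dually using the containment in a fixed $(i+1)$-dimensional subcode.

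In summary, the proof I would write is: set $X = Supp(D)$, invoke Lemma~\ref{lem:techn} to get $i = \dim D \leqslant \dim C(X) = n(X)$, and then rule out strict inequality by a minimality argument tailored to each greedy definition, concluding $n(Supp(D)) = i$ so that $Supp(D) \in \mathcal{N}_i$. I expect the bulk of the subtlety — and the part most likely to need care in the actual write-up — to be verifying that the support-preserving enlargement genuinely contradicts the greedy minimality rather than merely exhibiting a different subcode of equal weight.
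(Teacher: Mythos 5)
Your first half is fine and matches the paper: with $X=Supp(D)$ you get $D\subset C(X)$ and hence $i=\dim D\leqslant \dim C(X)=n(X)$ from Lemma~\ref{lem:techn}. The gap is in the reverse inequality, and you have in fact put your finger on it yourself: enlarging $D$ to a $D''$ of the same support only produces competing $i$-dimensional subcodes of weight $\emph{at most}$ $wt(D)$, and that is perfectly consistent with $D$ being greedy, since greedy subcodes are only required to minimize weight, not to be unique. ``The greedy process would have no reason to land exactly on $D$'' is not a contradiction; it needs a \emph{strict} weight decrease, and your construction never leaves the support $Supp(D)$, so it cannot supply one.

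The missing idea is to shrink the support rather than enlarge the subcode. Suppose $n(Supp(D))\geqslant i+1$. Because the nullity function increases by at most one when a single element is added, walking from $Supp(D')$ (which has nullity $i-1$, by induction on the greedy chain) up to $Supp(D)$ one element at a time must pass through a set $X$ with $Supp(D')\subset X\subsetneq Supp(D)$ and $n(X)=i$. By Lemma~\ref{lem:techn} the subcode $C(X)$ then has dimension exactly $i$, it contains $D'$ (so it satisfies the same containment constraint as $D$ in the bottom-up and CEZ definitions), and its weight is at most $|X|<|Supp(D)|=wt(D)$ --- a strictly smaller competitor, which is the contradiction you need. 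The cases $i=1$, the top-down case, and the case of a subcode computing $d_i$ are handled by the same device of choosing a proper subset of $Supp(D)$ of nullity $i$; in no case does one need the enlargement $D''$ at all.
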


\begin{proof} First of all, a codeword is a dependence relation between the columns of $H$, and by definition of the support, this is actually a dependence relation between the submatrix of $H$ indexed by the support. Saying that the subcode $D$ has dimension $i$ implies that there are at least $i$ independent relations between these columns, that is $n(Supp(D)) \geqslant i$. We will now prove that there is equality in the three cases of greedy subcodes.\\
When $D$ is a greedy $1$-subcode, a CEZ greedy $1$-subcode, a top-down greedy $i$-subcode, or a subcode that compute $d_i$, suppose that $n(Supp(D))\geqslant i+1$. Then there exists $X \subsetneq Supp(D)$ such that $n(X)=i$. Consider the subcode $C(X)  \subsetneq D$. Since it is a strictly smaller subcode than $D$, it has dimension at most $i-1$. At the same time, since $n(X)=i$, there are $i$ independent relations between the columns indexed by $X$, that is, the dimension of $C(X)$ is $i$, which is absurd.\\
Now, suppose that $i>1$ andlet $D$ be a bottom-up or a CEZ $i$-greedy subcode. Then there exists a subcode $D' \subset D$ which is either a greedy $(i-1)$-subcode or that computes $d_{i-1}$. In any case, by what we have just proved, $n(Supp(D'))=i-1$. If $n(Supp(D))>i$, then there exists a set $X$ such that $n(X)=i$ and \[Supp(D') \subsetneq X \subsetneq Supp(D).\] Then \[D' \subset C(X) \subsetneq D.\] Thus, \[i \leqslant \dim C(X) < \dim D = i\] which is absurd.

\qed \end{proof}

For related results, see \cite[Section 3]{GhorpadeSingh20}.
We have actually a stronger result, namely:

\begin{cor} \label{l1}
Let $C$ be a $[n,k]$ linear code over $\Fq$. Let $1\leqslant i\leqslant k$ and let $D \subset C$ be a $i$-greedy subcode. Then \[Supp(D) \in N_i.\]
\end{cor}

\begin{proof}
This a consequence a Lemma~\ref{lem:coden} and the same procedure we did in the proof of the Proposition~\ref{prop:2defs}. We look at the lowest $i$ such that $Supp(D_i)$ is not in $N_i$. Consider the two subcodes $D_{i-1} \subset D_i$. Then $X=Supp (D_{i-1})$ is in $N_{i-1}$, while $Y=Supp(D_i)$ has nullity $i$, but is not in $N_i$. Thus, there exists another subset $Z$ such that $Z \subsetneq Y$ in $N_i$. Of course $X \not\subset Z$, otherwise $C(Z)$ will contradict the minimality of $|Supp(D_i)|$ by Lemma~\ref{lem:techn}. Take $Z' = Y-\{x\}$ for a $x \in X-Z$. Then $n(Z')=i$. By minimality of $X$, we thus have $n(X \cap Z') = i-2$. Then we have \[2i-2 \geqslant n(Y)+n(X \cap Z') \geqslant n(X) + n(Z') = 2i-1\] which is absurd.
\qed \end{proof}

\begin{thm}\label{th:coincide}
The greedy weights of a $[n,k]$-linear code $C$ and its associated matroid coincide.
\end{thm}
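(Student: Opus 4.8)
The plan is to prove that each of the three types of greedy weights for the code $C$ agrees with the corresponding greedy weight for the matroid $M_C$, treating the bottom-up weights $e_r$, the top-down weights $\tilde e_r$, and the CEZ weights $g_r$ in turn. The key bridge between the two worlds is Lemma~\ref{lem:techn}, which establishes a dictionary between subcodes and subsets of $E$: the subcode $C(X)$ has dimension $n(X)$, and the operation $X \mapsto \Supp(C(X))$ recovers the minimal cycle sitting inside $X$. Dually, given an $i$-dimensional subcode $D$, Lemma~\ref{lem:coden} and Corollary~\ref{l1} tell us that $\Supp(D) \in N_i$. So the correspondence $D \mapsto \Supp(D)$ and $X \mapsto C(X)$ will let me translate chains of subcodes into chains of cycles and back.

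First I would set up this correspondence carefully as a tool: if $D_1 \subsetneq \cdots \subsetneq D_r$ is a chain of subcodes with $\dim D_j = j$, then $\Supp(D_1) \subseteq \cdots \subseteq \Supp(D_r)$ is a chain of subsets with $n(\Supp(D_j)) = j$ by Lemma~\ref{lem:techn}, and the inclusions are \emph{strict} because the nullities strictly increase. Conversely, given a chain $\sigma_1 \subsetneq \cdots \subsetneq \sigma_r$ in $\mathcal N_1 \times \cdots \times \mathcal N_r$, the subcodes $C(\sigma_1) \subseteq \cdots \subseteq C(\sigma_r)$ form a chain of dimensions $1,\ldots,r$ by the dimension formula, again with strict inclusions. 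Crucially, the weight $wt(D) = |\Supp(D)|$ matches $|\sigma_j|$ under this correspondence, and when $\sigma_j \in N_j$ we have $\Supp(C(\sigma_j)) = \sigma_j$, so the cardinalities are preserved in both directions. This shows the two minimization problems are over essentially the same data.

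Next I would handle the three cases. For the bottom-up weights: a bottom-up greedy $r$-subcode is built by the same greedy recursion (extend a minimal-weight chain one dimension at a time choosing the smallest support) as the matroid-side $\min_{lex} \overline\Sigma$ of Definition~\ref{def:greedyM}, since by the correspondence minimizing the weight of a subcode containing a fixed smaller subcode is exactly minimizing $|\sigma_r|$ over cycles containing a fixed $\sigma_{r-1}$. By Corollary~\ref{l1} every greedy subcode support lands in $N_r$, and by Proposition~\ref{prop:2defs} the matroid greedy weights may be computed over $\Sigma'$ (chains of minimal cycles), so the lexicographic optimization over towers of subcodes matches the lexicographic optimization over $\Sigma'$ term by term. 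The top-down case is symmetric, using the revlex order and the fact that a top-down greedy subcode is contained in a larger one. For the CEZ weights, the defining condition — that the $r$-subcode contain a subcode \emph{computing} $d_{r-1}$ — translates under the correspondence precisely to the condition in Definition~\ref{def:CEZM} that $\sigma_r$ contain some $\tau \in \mathcal N_{r-1}$ with $|\tau| = d_{r-1}$, because by Lemma~\ref{lem:techn} a subcode computes $d_{r-1}$ iff its support is a cycle of minimal cardinality $d_{r-1}$, using that the code and matroid generalized Hamming weights agree (Definitions~\ref{def:Wei} and~\ref{def:hammingM}).

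The main obstacle I anticipate is being scrupulous about the strictness of inclusions and the exact matching of the \emph{recursive greedy choice} with the \emph{global lexicographic minimum}. The code-side definitions (Definitions~\ref{def:bottomupsubcode}, \ref{def:topdownsubcode}, \ref{def:CEZsubcode}) are phrased recursively, one dimension at a time, whereas the matroid-side definition is a single lex/revlex minimization over all complete chains; I must verify these coincide, which is where the argument in Proposition~\ref{prop:2defs} — that one may restrict to $N_i$ and that the greedy extension is forced — does the real work. In particular I need to confirm that a greedily-built chain of subcodes has support cardinalities equal to the lex-minimal element of $\overline\Sigma$, not merely bounded by it, and that the translation respects the ``computes'' relation in both directions without losing any minimal configurations. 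Once the dictionary is stated as a clean correspondence preserving dimension, weight, inclusions, and the ``computes $d_{r-1}$'' property, each of the three equalities should follow by invoking the already-proven Proposition~\ref{prop:2defs} and Corollary~\ref{l1}.
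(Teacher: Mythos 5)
Your proposal is correct and follows essentially the same route as the paper's proof: both directions of the comparison come from the dictionary $D \mapsto \Supp(D)$ (via Lemma~\ref{lem:coden} and Corollary~\ref{l1}) and $\sigma \mapsto C(\sigma)$ (via Lemma~\ref{lem:techn}), applied to towers of subcodes and chains of cycles for each of the three weight types. The paper's argument is simply a terser version of this, deriving the $\geqslant$ inequalities from Lemma~\ref{lem:coden} and then exhibiting the chain $D_i = C(\sigma_i)$ for an optimal $S \in \Sigma'$ to obtain $\leqslant$.
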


\begin{proof}
From Lemma~\ref{lem:coden} and Definitions~\ref{def:greedyM} and~\ref{def:CEZM}, we have
 \begin{eqnarray*}(e_1(C),\cdots,e_k(C)) &\geqslant_{lex}& (e_1(M),\cdots,e_k(M)),\\
(\tilde{e}_1(C),\cdots,\tilde{e}_k(C)) &\geqslant_{revlex}& (\tilde{e}_1(M),\cdots,\tilde{e}_k(M))\end{eqnarray*} and for every $1\leqslant i \leqslant k$, \[g_i(C) \geqslant g_i(M).\] 
Let $S=(\sigma_1,\cdots,\sigma_k)\in \Sigma'$ be such that $e(S) = (e_1(M),\cdots,e_k(M))$, and consider the  subcodes $D_i=C(\sigma_i)$. From Lemma~\ref{lem:techn}, we know that $Supp(D_i)=\sigma_i \in N_i$ and $\dim D_i = n(\sigma_i)=i$.
Clearly the $D_i$ is a chain of linear codes totally ordered by inclusion, with
$(|Supp(D_1)|,\cdots,|Supp(D_k)|)=(e_1(M),\cdots,e_k(M),$ so \[(e_1(C),\cdots,e_k(C)) \leqslant_{lex} (e_1(M),\cdots,e_k(M)).\]
The proofs for top-down and CEZ greedy weights are done in a  similar way.

\qed \end{proof}

%%%%%%%%%%%%%%%%%%%%%%%%%%%%%%%%%%%%%%%%%%%%%%%%%%%%%%%%%%%%%%%%%
%
%  GREEDY WEIGHTS AND RESOLUTIONS
%
%%%%%%%%%%%%%%%%%%%%%%%%%%%%%%%%%%%%%%%%%%%%%%%%%%%%%%%%%%%%%%%%%

\section{Greedy weights and resolutions of Stanley-Reisner rings} \label{main}

Let $M$ be a matroid of rank $n-k$ over a finite set $E$ of cardinality $n$ (for example the matroid associated to a $[n,k]$-linear code). As seen in Corollary~\ref{cor:greedyamongbetti}, the sets that compute the different greedy weights are to be found in the sets that have non-zero Betti numbers. Together with the main result from~\cite{Johnsen13}, this suggests that all information about various kinds of greedy weights might be encoded in minimal free resolutions of the associated Stanley-Reisner ring. This is what we will look into in the first part of this section. In the second part, we will look into the concept of chained codes and matroids.

\subsection{Greedy weights from strands}

In the rest of this section, if $M$ is a matroid on the finite set $E$ of cardinality $n$, then $S$ denotes the polynomial ring $\K[e,\ e \in E]$. This ring is naturally $\N^{n}$ and $\N$ graded.

\begin{defn}\label{def:ministrand}
Let \[f: \bigoplus_{\sigma \in \N^n} S(-\sigma)^{a_\sigma} \rightarrow \bigoplus_{\sigma \in \N^n}S(-\sigma)^{b_\sigma}\] and $\rho,\mu \in \N^n$. Then \[f_{\rho,\mu} : S(-\rho)^{a_\rho} \hookrightarrow \bigoplus_{\sigma \in \N^n} S(-\sigma)^{a_\sigma} \rightarrow \bigoplus_{\sigma \in \N^n}S(-\sigma)^{b_\sigma} \twoheadrightarrow S(-\mu)^{b_\mu}.\] Similarly, in the $\N$-graded context, let \[g: \bigoplus_{i \in \N} S(-i)^{a_i} \rightarrow \bigoplus_{i \in \N}S(-i)^{b_i}\] and $p,q \in \N$. Then \[g_{p,q} : S(-p)^{a_p} \hookrightarrow \bigoplus_{i \in \N} S(-i)^{a_i} \rightarrow \bigoplus_{i \in \N}S(-i)^{b_i} \twoheadrightarrow S(-q)^{b_q}.\] In both cases, the leftmost map is the inclusion map, while the rightmost map is the natural projection.
\end{defn}

We are now able to define the strands of a resolution.

\begin{defn}\label{def:strand}
Let $M$ be a matroid of rank $n-k$ on a finite set of cardinality $n$. If  
\[
0 \leftarrow S_M \overset{f_0}\leftarrow S \overset{f_1}\leftarrow \bigoplus_{j \in \N}S(-j)^{\beta_{1,j}} \overset{f_2}\leftarrow  \cdots \overset{f_k}\leftarrow \bigoplus_{j \in \N}S(-j)^{\beta_{k,j}} \leftarrow 0\] is a $\N$-graded resolution, and if  $\bm{h}=(h_1,\cdots,h_k) \in \N^k$, the $\bm{h}$-strand of the resolution is the sequence \[(f_{1,0,h_1},f_{2,h_1,h_2},\cdots,f_{k,h_{k-1},h_k}).\] The strand of the resolution is the $\bm{h}$-strand with $\bm{h}=(d_1\cdots,d_k)$. \\
If \[
0 \leftarrow S_M \overset{\phi_0}\leftarrow S \overset{\phi_1}\leftarrow \bigoplus_{\sigma \in \N^n}S(-j)^{\beta_{1,\sigma}} \overset{\phi_2}\leftarrow  \cdots \overset{\phi_k}\leftarrow \bigoplus_{\sigma \in \N^n}S(-j)^{\beta_{k,\sigma}} \leftarrow 0\] is a $\N^n$-graded resolution, and if  $\bm{\sigma}=(\sigma_1,\cdots,\sigma_k) \in (\N^n)^k$, the $\bm{\sigma}$-strand of the resolution is the sequence \[(\phi_{1,(0\cdots,0),\sigma_1},\phi_{2,\sigma_1,\sigma_2},\cdots,\phi_{k,\sigma_{k-1},\sigma_k}).\] 
\end{defn}

 We have already mentioned that $\beta_{i,\sigma} \neq 0 \Rightarrow \sigma \in \{0,1\}^n$. In the sequel, we will therefore identify elements of $\{0,1\}^n$ with subsets of $E=\{1,\cdots,n\}$. The main theorem of this section, will be a consequence of the following lemma.

\begin{lemma}\label{lem:nonzeromap}
Let $M$ be a matroid of rank $n-k$ on a set $E$ of cardinality $n$. Let \[0 \leftarrow S_M \overset{\phi_0}\leftarrow S \overset{\phi_1}\leftarrow \bigoplus_{\alpha \in \N^n}S(-\alpha)^{\beta_{1,\alpha}} \overset{\phi_2}\leftarrow \cdots \overset{\phi_k}\leftarrow \bigoplus_{\alpha \in \N^n}S(-\alpha)^{\beta_{k,\alpha}} \leftarrow 0\] be any minimal $\N^n$ graded resolution of its Stanley-Reisner ring. Let $\rho,\mu$ be two subsets of $E$. Then \[\phi_{l,\rho,\mu} \neq 0 \Leftrightarrow \rho \in N_{l-1},\ \mu \in N_{l} \textrm{ and } \rho \subset \mu.\]
\end{lemma}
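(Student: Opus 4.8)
The plan is to read off both implications from the fact that, in an $\mathbb{N}^n$-graded resolution, a component of the differential between two free summands is nothing but multiplication by a monomial, together with the identification of nonzero Betti summands with cycles provided by Theorem~\ref{thm:AAECC}. Throughout, $\phi_l$ maps the homological degree $l$ term $F_l=\bigoplus_\alpha S(-\alpha)^{\beta_{l,\alpha}}$ to $F_{l-1}$, and $\phi_{l,\rho,\mu}$ is the block connecting the summand of multidegree $\mu$ in $F_l$ to the summand of multidegree $\rho$ in $F_{l-1}$.

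The implication ``$\Rightarrow$'' is the easy one. If $\phi_{l,\rho,\mu}\neq 0$, then both summands it connects are nonzero, so $\beta_{l,\mu}\neq 0$ and $\beta_{l-1,\rho}\neq 0$; by Theorem~\ref{thm:AAECC} this forces $\mu\in N_l$ and $\rho\in N_{l-1}$. Since $\beta_{i,\alpha}=0$ unless $\alpha\in\{0,1\}^n$, we may read $\mu,\rho$ as subsets of $E$, and each entry of the block is a homogeneous map $S(-\mu)\to S(-\rho)$, i.e.\ multiplication by a homogeneous element of multidegree $\mu-\rho$. Nonvanishing forces $\mu-\rho\in\mathbb{N}^n$, that is $\rho\subseteq\mu$, and since $n(\rho)=l-1\neq l=n(\mu)$ we get $\rho\subsetneq\mu$ (minimality of the resolution, which puts the entries in $\mathfrak{m}$, gives $\rho\neq\mu$ independently).

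For ``$\Leftarrow$'' I would first reduce to the case $\mu=E$. By Hochster's formula $\beta_{i,W}=\dim_\K\tilde H_{|W|-i-1}(\I|_W;\K)$, the Betti numbers with $W\subseteq\mu$ depend only on $\I|_\mu$, hence on $M|_\mu$; moreover the summands with $W\subseteq\mu$ form a subcomplex (the differentials only lower multidegrees), and, after restricting scalars to $\K[X_e:e\in\mu]$, this subcomplex is the minimal resolution of $S_{M|_\mu}$ and contains $\phi_{l,\rho,\mu}$ unchanged. Replacing $M$ by $M|_\mu$ we may thus assume $\mu=E$, so that $l=n(E)$ is the projective dimension and $E$ is the unique element of $N_l$, whence $F_l=S(-E)^{\beta_{l,E}}$. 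The combinatorial engine is then the observation that \emph{no multidegree strictly between $\rho$ and $E$ carries a Betti number}: if $\rho\subsetneq W\subsetneq E$ then $n(W)\geqslant l-1$ (as $\rho\subseteq W$) and $n(W)\leqslant l-1$ (as $W\subsetneq E\in N_l$), so $n(W)=l-1$; but $\rho\subsetneq W$ with equal nullity shows $W\notin N_{l-1}$, and $W$ lies in no other $N_i$, so $\beta_{i,W}=0$ for all $i$ by Theorem~\ref{thm:AAECC}. Equivalently, $E\setminus\rho$ is a circuit of the contraction $M/\rho$ and $\rho$ is covered by $E$ in the poset of cycles. Invoking the description of the multigraded differential of a squarefree monomial ideal through its lcm-lattice --- here the lattice of cycles (\cite{miller04}) --- the $(\rho,E)$-component is governed by the reduced homology of the open interval $(\rho,E)$, which is empty; the component is therefore, up to a nonzero scalar, the first-syzygy map attached to the circuit $E\setminus\rho$ of $M/\rho$, and since $\phi_1$ sends each circuit generator $S(-c)$ to the nonzero monomial $\bm X^{c}$, we conclude $\phi_{l,\rho,E}\neq 0$.

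The main obstacle is exactly this last transfer: making rigorous that the $(\rho,\mu)$-component is controlled by the interval $[\rho,\mu]$ of the cycle lattice and is forced to be nonzero when that interval is a single covering step. Deletion (restriction) behaves perfectly on resolutions, but isolating the component requires the contraction $M/\rho$, which does not act as transparently; the cleanest route is the lcm-lattice description of the differentials, where emptiness of $(\rho,\mu)$ forces a nonzero connecting map between the top reduced homologies of $\I|_\rho$ and $\I|_\mu$, using the shellability (hence Cohen--Macaulayness) of matroid independence complexes to concentrate all relevant homology in a single degree. A more self-contained alternative is to verify the claim for one conveniently chosen minimal resolution --- all minimal resolutions being isomorphic by multidegree-preserving maps, nonvanishing of a multidegree block is an invariant --- built cycle-by-cycle as in \cite{Johnsen13}, where the covering differentials are visibly nonzero.
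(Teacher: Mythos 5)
Your forward implication is fine, and is if anything more direct than the paper's: each entry of the block $\phi_{l,\rho,\mu}$ is multiplication by an element of multidegree $\mu-\rho$, so nonvanishing forces $\rho\subseteq\mu$, and Theorem~\ref{thm:AAECC} supplies $\rho\in N_{l-1}$, $\mu\in N_l$. The gap is in the converse, and it sits exactly where you flag it. The principle you invoke --- that if no multidegree strictly between $\rho$ and $\mu$ carries a Betti number, then the component $\phi_{l,\rho,\mu}$ must be nonzero --- is not something the lcm-lattice formalism hands you: the Hochster/Gasharov--Peeva--Welker machinery computes the Betti numbers $\beta_{i,\mu}$ from reduced homology of open intervals $(\hat{0},\mu)$ (equivalently of the restricted independence complex), but it says nothing per se about individual blocks of the differential, and converting ``$\rho$ is covered by $\mu$ in the cycle lattice'' into ``the connecting block is nonzero'' is precisely the content of the lemma. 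Your correct observation that $(\rho,\mu)$ contains no Betti multidegrees, and your reduction to $\mu=E$, are both sound but do not close this. The fallback of checking one convenient minimal resolution has its own gap: a multigraded isomorphism between two minimal resolutions need not be diagonal in multidegree (its $(\beta,\mu)$-blocks may be nonzero for any $\beta\subseteq\mu$), so the nonvanishing of a single block is not obviously independent of the chosen resolution, and that invariance would itself require proof.

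The paper takes a different and more elementary route that avoids the issue: it compares the minimal resolution with the Taylor resolution, whose differential is completely explicit --- the summand indexed by a set $T$ of $l$ circuits sits in multidegree $\bigcup T$ and maps with nonzero coefficient exactly onto the summands indexed by $T$ minus one circuit. Since the trivial split-off complexes only contribute to diagonal blocks $\rho=\mu$, the question reduces to whether the Taylor differential has a nonzero $(\rho,\mu)$-block, and that becomes pure matroid combinatorics: for $\rho\in N_{l-1}$ and $\mu\in N_l$ one shows $\rho\subset\mu$ if and only if $\mu=\rho\cup\tau$ for some circuit $\tau$ (via submodularity of the nullity function, using $n(\rho\cap\tau)=0$), and writes $\rho$ as a union of $l-1$ circuits using \cite[Proposition 1]{Johnsen13}. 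If you want to complete your argument you would need to supply the missing transfer step, for instance by an exactness argument on the degree-$\mu$ strand of the resolution; but that essentially amounts to redoing the explicit computation that the Taylor comparison gives for free.
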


\begin{proof} Any minimal resolution differs from the Taylor resolution (see~\cite[Section 7.1]{Herzog11} by adding trivial resolutions of the form \[0 \leftarrow \cdots \leftarrow 0 \leftarrow S(-\sigma) \overset{\psi_j}\leftarrow S(-\sigma) \leftarrow 0 \leftarrow \cdots \leftarrow 0.\] For $\rho,\mu \subset E$, it is easy to see that if $f=g \oplus h$, then $f_{\rho,\mu} = g_{\rho_\mu} \oplus h_{\rho,\mu}$. In particular, if $\rho \neq \mu$, then \[\psi_{i,\rho,\mu}=0\] for every $i$, so that \[\phi_{l,\rho,\mu} \neq 0 \Leftrightarrow \Phi_{l,\rho,\mu} \neq 0\] where $\Psi_*$ are the maps in the Taylor resolution. In any minimal free resolution, we have \[\beta_{l,X} \neq 0 \Leftrightarrow  X \in N_l,\] so we might assume that $\rho \in N_{l-1}$ and $\mu \in N_l$, otherwise $\phi_{l,\rho,\mu} = 0$. In particular, this means that $\rho \neq \mu$. 

In a first step, we prove that \[\rho \subset \mu \Leftrightarrow \exists \tau \in \C,\ \mu = \tau \cup \rho.\] One way is obvious. For the other way, let $y \in \mu \backslash \rho$. Since $\mu$ is a cycle, there exists $\tau \in \C$ with $y \in \tau \subset \mu$. Then we have \[n(\mu) \geqslant n(\rho \cup \tau) \geqslant n(\rho) + n(\tau) - n(rho \cap \tau) = l-1\] the equality coming from the fact that $n(\rho \cap \tau)=0$ since $\rho \cap \tau \subsetneq \tau$ is strictly included in a circuit and has thus nullity $0$. Since $n(\mu)=l$ and $\mu$ is minimal, we have equality \[\rho \cup \tau = \mu.\]
Now, if $\rho \cup \tau=\mu$ and $\rho  \in N_{l-1}$, by~\cite[Proposition 1]{Johnsen13}, we can write $\rho = \bigcup_{i=1}^{l-1} \sigma_i$ for some distinct circuits $\sigma_i$, and by construction of the differential of the Taylor complex, \[\Phi_{l,\rho,\mu} \neq 0.\] Conversely, if $\Psi_{l,\rho,\mu} \neq 0$, then again by construction of the differential of the Taylor complex, $\mu$ is the union of $l$ circuits, and we obtain  $\rho$ by taking the union of all these circuits but $1$. 

\qed \end{proof}

We then have:
\begin{thm} \label{nonzerocomp} 
Let $M$ be a matroid of rank $n-k$ on a set of cardinality $n$. Let \[
0 \leftarrow S_M \overset{f_0}\leftarrow S \overset{f_1}\leftarrow \bigoplus_{j \in \N}S(-j)^{\beta_{1,j}} \overset{f_2}\leftarrow  \cdots \overset{f_k}\leftarrow \bigoplus_{j \in \N}S(-j)^{\beta_{k,j}} \leftarrow 0\] and \[
0 \leftarrow S_M \overset{\phi_0}\leftarrow S \overset{\phi_1}\leftarrow \bigoplus_{\sigma \in \N^n}S(-j)^{\beta_{1,\sigma}} \overset{\phi_2}\leftarrow  \cdots \overset{\phi_k}\leftarrow \bigoplus_{\sigma \in \N^n}S(-j)^{\beta_{k,\sigma}} \leftarrow 0\] be $\N$ and $\N^n$-graded resolutions respectively. Then
\begin{enumerate}
\item $e_1=g_1=d_e= \min\{j, \ \beta_{1,j} \neq 0\}$ and $\tilde{e}_k = \min \{j,\ \beta_{k,j} \neq 0\}.$
\item For $2\leqslant l \leqslant k$, the greedy weight $e_l$ is \[e_l=\min \{|\sigma|,\ \exists \tau \textrm{ that computes }e_{l-1},\ \phi_{l,\tau,\sigma} \neq 0\}.\]
\item For $1\leqslant l \leqslant k-1$, the top down greedy weight $\tilde{e}_l$ is \[e_l=\min \{|\sigma|,\ \exists \tau \textrm{ that computes }\tilde{e}_{l+1},\ \phi_{l,\sigma,\tau} \neq 0\}.\]
\item For $2 \leqslant l \leqslant k$, the CEZ greedy weight $g_l$ is \[g_l = \min \{j,\  f_{l,d_{l-1},j} \neq 0\}.\]
\item \[(e_1,\cdots,e_k)= \min_{lex} \{e(\bm{\sigma}),\ \bm{\sigma} \in (2^E)^n, \textrm{ the }\bm{\sigma}\textrm{-strand consists of non-zero maps}\}.\]
\item \[(\tilde{e}_1,\cdots,\tilde{e}_k)= \min_{revlex} \{e(\bm{\sigma}),\ \bm{\sigma} \in (2^E)^n, \textrm{ the }\bm{\sigma}\textrm{-strand consists of non-zero maps}\}.\]
\end{enumerate}

\end{thm}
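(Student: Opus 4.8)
The plan is to prove Theorem~\ref{nonzerocomp} by assembling the pieces already established, with Lemma~\ref{lem:nonzeromap} serving as the translation device between the combinatorial language of cycles and chains (used in Definition~\ref{def:greedyM} and Proposition~\ref{prop:2defs}) and the homological language of non-zero component maps in the resolution. The key identity is that of Lemma~\ref{lem:nonzeromap}: in any minimal $\N^n$-graded resolution, $\phi_{l,\rho,\mu}\neq 0$ if and only if $\rho\in N_{l-1}$, $\mu\in N_l$, and $\rho\subset\mu$. Thus a $\bm{\sigma}$-strand consisting entirely of non-zero maps is \emph{exactly} the same data as an element $S=(\sigma_1,\cdots,\sigma_k)\in\Sigma'$, the set of maximal chains of cycles appearing in Proposition~\ref{prop:2defs}. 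Once this dictionary is in place, items (5) and (6) become immediate restatements of Proposition~\ref{prop:2defs}, since $e(\bm{\sigma})$ in the theorem is the same cardinality vector $e(S)$, and the $\min_{lex}$ (resp.\ $\min_{revlex}$) over non-zero strands is the $\min$ over $S\in\Sigma'$.

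First I would prove item (1). The equalities $e_1=g_1=d_1$ are the content of the remarks after Definitions~\ref{def:bottomup} and~\ref{def:CEZ}, and $\tilde e_k=d_k$ is the remark after Definition~\ref{def:topdown}; the identification of these with $\min\{j:\beta_{1,j}\neq 0\}$ and $\min\{j:\beta_{k,j}\neq 0\}$ is then precisely the formula $d_i=\min\{j:\beta_{i,j}\neq 0\}$ from Theorem~\ref{thm:AAECC}. Next I would treat items (2) and (3), which are the inductive heart of the bottom-up and top-down statements. For item (2), I would argue that a subset $\sigma$ computing $e_l$ must, by Proposition~\ref{prop:2defs}, lie in $N_l$ and contain some $\tau\in N_{l-1}$ computing $e_{l-1}$; by Lemma~\ref{lem:nonzeromap} the inclusion $\tau\subset\sigma$ with $\tau\in N_{l-1}$, $\sigma\in N_l$ is equivalent to $\phi_{l,\tau,\sigma}\neq 0$. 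The minimality of $|\sigma|$ in Definition~\ref{def:greedyM} (extended via Proposition~\ref{prop:2defs}) then translates verbatim into the displayed formula. Item (3) is the mirror image using the reverse-lexicographic order, with the roles of the two indices of $\phi$ swapped so that $\tau$ (computing $\tilde e_{l+1}$) sits above $\sigma$.

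For item (4) I would use the $\N$-graded maps $f_{l,d_{l-1},j}$. Here the subtlety is that in the $\N$-grading the multidegree collapses to total degree, so a non-zero map $f_{l,d_{l-1},j}$ records the existence of \emph{some} cycle $\tau\in N_{l-1}$ with $|\tau|=d_{l-1}$ (which exists by Theorem~\ref{thm:AAECC} since $d_{l-1}=\min\{j:\beta_{l-1,j}\neq 0\}$) sitting inside \emph{some} cycle $\sigma\in N_l$ with $|\sigma|=j$. This is exactly the defining condition for $g_l$ in Definition~\ref{def:CEZM} as refined by Proposition~\ref{prop:2defs}, so $g_l=\min\{j:f_{l,d_{l-1},j}\neq 0\}$ follows. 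The one point requiring care is that the $\N$-graded component $f_{l,d_{l-1},j}$ is non-zero precisely when some $\N^n$-graded component $\phi_{l,\tau,\sigma}$ with $|\tau|=d_{l-1}$ and $|\sigma|=j$ is non-zero; this is the relation $\beta_{i,j}=\sum_{wt(\alpha)=j}\beta_{i,\alpha}$ together with the fact that the component maps of the $\N$-graded resolution are obtained from the $\N^n$-graded one by summing over fixed total degree.

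The main obstacle I anticipate is the last point just mentioned: rigorously passing between the $\N$-graded and $\N^n$-graded resolutions in item (4), i.e.\ verifying that a non-zero $\N$-graded component in total degrees $(d_{l-1},j)$ is detected by, and detects, a non-zero $\N^n$-graded component in some pair of multidegrees $(\tau,\sigma)$ with those cardinalities and with $\tau\subset\sigma$. The inclusion $\tau\subset\sigma$ is essential (a non-zero component requires it by Lemma~\ref{lem:nonzeromap}) but is invisible at the level of total degrees, so one must argue that the refinement provided by Lemma~\ref{lem:nonzeromap} is compatible with the coarsening to the $\N$-grading. I expect this to follow cleanly from the compatibility of the two minimal resolutions (the $\N$-graded one being the image of the $\N^n$-graded one under the grading-forgetting functor $\N^n\to\N$, $\alpha\mapsto wt(\alpha)$), but it is the step where the two gradings genuinely interact and so deserves the most careful wording. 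Everything else is a faithful transcription of Proposition~\ref{prop:2defs} through the equivalence of Lemma~\ref{lem:nonzeromap}.
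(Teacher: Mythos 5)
Your proposal is correct and follows essentially the same route as the paper: Lemma~\ref{lem:nonzeromap} as the dictionary between non-zero component maps and inclusions of cycles, Proposition~\ref{prop:2defs} to restrict to chains in $\Sigma'$, and Theorem~\ref{thm:AAECC} together with the observation that every cycle of cardinality $d_{l-1}$ in homological degree $l-1$ computes $d_{l-1}$ to handle the passage to the $\N$-graded resolution in item~(4). The paper's proof is just a terser version of exactly this argument.
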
 
\begin{proof}
The first point is just the definition. The second and third point are consequences of the previous lemma. The fourth point is also a consequence of the previous lemma. Here, we can take the $\N$-graded resolution, since any subset of cardinality $d_{l-1}$ with non-zero Betti number computes $d_{l-1}$. The two last points follow from the second and third point, as well as Proposition~\ref{prop:2defs}.
\qed \end{proof}

\begin{example}
Using for example \cite{Novik02}, we are able to compute the $\N^n$-graded resolution of the code of Example~\ref{exa:code}. 
\setcounter{MaxMatrixCols}{7}
\[\begin{bmatrix}{6}{7}{8} & {5}{7}{8} & {5}{6}{8} & 
{5}{6}{7} & {2}{3}{4} & {1}{3}{4} & {1}{2}
\end{bmatrix}\]

\[S \longleftarrow \begin{array}{c}S(-{6}{7}{8}) \bigoplus S(-{5}{7}{8}) \bigoplus S(-{5}{6}{8})\bigoplus S(-{5}{6}{7})\\ \bigoplus S(-{2}{3}{4}) \bigoplus S(-{1}{3}{4}) \bigoplus S(-{1}{2})\end{array} \]

\setcounter{MaxMatrixCols}{17}
\begin{small}\[\begin{bmatrix}-{2}{3}{4} & 0 & 0 & 0 & -{1}{3}{4} & 0 & 0 & 0 & 
-{1}{2} & 0 & 0 & 0 & -{5} & -{5} & -{5} & 0 & 0 \\
 0 & -{2}{3}{4} & 0 & 0 & 0 & -{1}{3}{4} & 0 & 0 & 0 & -{1}{2} &
0 & 0 & {6} & 0 & 0 & 0 & 0 \\
 0 & 0 & -{2}{3}{4} & 0 & 0 & 0 & -{1}{3}{4} & 0 & 0 & 0 & 
-{1}{2} & 0 & 0 & {7} & 0 & 0 & 0 \\
 0 & 0 & 0 & -{2}{3}{4} & 0 & 0 & 0 & -{1}{3}{4} & 0 & 0 & 0 & 
-{1}{2} & 0 & 0 & {8} & 0 & 0 \\
 {6}{7}{8} & {5}{7}{8} & {5}{6}{8} & {5}{6}{7} & 0 & 0 &
0 & 0 & 0 & 0 & 0 & 0 & 0 & 0 & 0 & -{1} & -{1} \\
 0 & 0 & 0 & 0 & {6}{7}{8} & {5}{7}{8} & {5}{6}{8} & 
{5}{6}{7} & 0 & 0 & 0 & 0 & 0 & 0 & 0 & {2} & 0 \\
 0 & 0 & 0 & 0 & 0 & 0 & 0 & 0 & {6}{7}{8} & {5}{7}{8} & 
{5}{6}{8} & {5}{6}{7} & 0 & 0 & 0 & 0 & {3}{4}
\end{bmatrix}\]\end{small}
\[\begin{array}{c}S(-{6}{7}{8}) \bigoplus S(-{5}{7}{8}) \bigoplus S(-{5}{6}{8}) \\
\bigoplus S(-{5}{6}{7}) \bigoplus S(-{2}{3}{4})\\ \bigoplus 
S(-{1}{3}{4}) \bigoplus S(-{1}{2})\end{array} \longleftarrow 
\begin{array}{c}S(-{2}{3}{4}{6}{7}{8}) \bigoplus S(-{2}{3}{4}{5}{7}{8}) 
\bigoplus S(-{2}{3}{4}{5}{6}{8})\\ \bigoplus 
S(-{2}{3}{4}{5}{6}{7}) \bigoplus S(-{1}{3}{4}{6}{7}{8}) 
\bigoplus S(-{1}{3}{4}{5}{7}{8})\\ \bigoplus 
S(-{1}{3}{4}{5}{6}{8}) \bigoplus S(-{1}{3}{4}{5}{6}{7}) 
\bigoplus S(-{1}{2}{6}{7}{8}) \\\bigoplus S(-{1}{2}{5}{7}{8}) 
\bigoplus S(-{1}{2}{5}{6}{8}) \bigoplus S(-{1}{2}{5}{6}{7}) \\
\bigoplus S(-{5}{6}{7}{8})^{3}  \bigoplus S(-{1}{2}{3}{4})^{2}\end{array}  \]

\setcounter{MaxMatrixCols}{17}
\[\begin{bmatrix}-{5} & -{5} & -{5} & 0 & 0 & 0 & {1} & {1} & 0 & 0 & 0 
& 0 & 0 & 0 & 0 & 0 & 0 \\
 {6} & 0 & 0 & 0 & 0 & 0 & 0 & 0 & {1} & {1} & 0 & 0 & 0 & 0 & 0 & 0 & 0 
\\
 0 & {7} & 0 & 0 & 0 & 0 & 0 & 0 & 0 & 0 & {1} & {1} & 0 & 0 & 0 & 0 & 0 
\\
 0 & 0 & {8} & 0 & 0 & 0 & 0 & 0 & 0 & 0 & 0 & 0 & {1} & {1} & 0 & 0 & 0 
\\
 0 & 0 & 0 & -{5} & -{5} & -{5} & -{2} & 0 & 0 & 0 & 0 & 0 & 0 & 0 & 0 &
0 & 0 \\
 0 & 0 & 0 & {6} & 0 & 0 & 0 & 0 & -{2} & 0 & 0 & 0 & 0 & 0 & 0 & 0 & 0 \\
 0 & 0 & 0 & 0 & {7} & 0 & 0 & 0 & 0 & 0 & -{2} & 0 & 0 & 0 & 0 & 0 & 0 \\
 0 & 0 & 0 & 0 & 0 & {8} & 0 & 0 & 0 & 0 & 0 & 0 & -{2} & 0 & 0 & 0 & 0 \\
 0 & 0 & 0 & 0 & 0 & 0 & 0 & -{3}{4} & 0 & 0 & 0 & 0 & 0 & 0 & -{5} & 
-{5} & -{5} \\
 0 & 0 & 0 & 0 & 0 & 0 & 0 & 0 & 0 & -{3}{4} & 0 & 0 & 0 & 0 & {6} & 0 & 0
\\
 0 & 0 & 0 & 0 & 0 & 0 & 0 & 0 & 0 & 0 & 0 & -{3}{4} & 0 & 0 & 0 & {7} & 0
\\
 0 & 0 & 0 & 0 & 0 & 0 & 0 & 0 & 0 & 0 & 0 & 0 & 0 & -{3}{4} & 0 & 0 & {8}
\\
 {2}{3}{4} & 0 & 0 & {1}{3}{4} & 0 & 0 & 0 & 0 & 0 & 0 & 0 & 0 & 0 &
0 & {1}{2} & 0 & 0 \\
 0 & {2}{3}{4} & 0 & 0 & {1}{3}{4} & 0 & 0 & 0 & 0 & 0 & 0 & 0 & 0 &
0 & 0 & {1}{2} & 0 \\
 0 & 0 & {2}{3}{4} & 0 & 0 & {1}{3}{4} & 0 & 0 & 0 & 0 & 0 & 0 & 0 &
0 & 0 & 0 & {1}{2} \\
 0 & 0 & 0 & 0 & 0 & 0 & {6}{7}{8} & 0 & {5}{7}{8} & 0 & 
{5}{6}{8} & 0 & {5}{6}{7} & 0 & 0 & 0 & 0 \\
 0 & 0 & 0 & 0 & 0 & 0 & 0 & {6}{7}{8} & 0 & {5}{7}{8} & 0 & 
{5}{6}{8} & 0 & {5}{6}{7} & 0 & 0 & 0
\end{bmatrix}\]
\[\begin{array}{c}S(-{2}{3}{4}{6}{7}{8}) \bigoplus S(-{2}{3}{4}{5}{7}{8}) 
\bigoplus S(-{2}{3}{4}{5}{6}{8}) \\\bigoplus 
S(-{2}{3}{4}{5}{6}{7}) \bigoplus S(-{1}{3}{4}{6}{7}{8}) 
\bigoplus S(-{1}{3}{4}{5}{7}{8})\\ \bigoplus 
S(-{1}{3}{4}{5}{6}{8}) \bigoplus S(-{1}{3}{4}{5}{6}{7}) 
\bigoplus S(-{1}{2}{6}{7}{8})\\ \bigoplus S(-{1}{2}{5}{7}{8}) 
\bigoplus S(-{1}{2}{5}{6}{8}) \bigoplus S(-{1}{2}{5}{6}{7}) \\
\bigoplus S(-{5}{6}{7}{8})^{3}  \bigoplus S(-{1}{2}{3}{4})^{2}  \end{array}
\longleftarrow \begin{array}{c}S(-{2}{3}{4}{5}{6}{7}{8})^{3}  \bigoplus 
S(-{1}{3}{4}{5}{6}{7}{8})^{3} \\ \bigoplus 
S(-{1}{2}{3}{4}{6}{7}{8})^{2}  \bigoplus 
S(-{1}{2}{3}{4}{5}{7}{8})^{2} \\ \bigoplus 
S(-{1}{2}{3}{4}{5}{6}{8})^{2}  \bigoplus 
S(-{1}{2}{3}{4}{5}{6}{7})^{2} \\ \bigoplus 
S(-{1}{2}{5}{6}{7}{8})^{3} \end{array} \]

\setcounter{MaxMatrixCols}{6}
\[\begin{bmatrix}-{1} & 0 & 0 & -{1} & 0 & 0 \\
 0 & -{1} & 0 & 0 & -{1} & 0 \\
 0 & 0 & -{1} & 0 & 0 & -{1} \\
 {2} & 0 & 0 & 0 & 0 & 0 \\
 0 & {2} & 0 & 0 & 0 & 0 \\
 0 & 0 & {2} & 0 & 0 & 0 \\
 -{5} & -{5} & -{5} & 0 & 0 & 0 \\
 0 & 0 & 0 & -{5} & -{5} & -{5} \\
 {6} & 0 & 0 & 0 & 0 & 0 \\
 0 & 0 & 0 & {6} & 0 & 0 \\
 0 & {7} & 0 & 0 & 0 & 0 \\
 0 & 0 & 0 & 0 & {7} & 0 \\
 0 & 0 & {8} & 0 & 0 & 0 \\
 0 & 0 & 0 & 0 & 0 & {8} \\
 0 & 0 & 0 & {3}{4} & 0 & 0 \\
 0 & 0 & 0 & 0 & {3}{4} & 0 \\
 0 & 0 & 0 & 0 & 0 & {3}{4}
\end{bmatrix}\]
\[
\begin{array}{c}S(-{2}{3}{4}{5}{6}{7}{8})^{3}  \bigoplus 
S(-{1}{3}{4}{5}{6}{7}{8})^{3}  \bigoplus 
S(-{1}{2}{3}{4}{6}{7}{8})^{2} \\ \bigoplus 
S(-{1}{2}{3}{4}{5}{7}{8})^{2}  \bigoplus 
S(-{1}{2}{3}{4}{5}{6}{8})^{2}  \bigoplus 
S(-{1}{2}{3}{4}{5}{6}{7})^{2}\\  \bigoplus 
S(-{1}{2}{5}{6}{7}{8})^{3} \end{array} \longleftarrow 
S(-{1}{2}{3}{4}{5}{6}{7}{8})^{6}  \]
where, for compactifying the notation, the number $a_1\ldots a_t$ represents the monomial $X_{a_1}\cdots X_{a_t}$. Here the $(\{1,2\},\{1,2,3,4\},\{1,2,3,4,6,7,8\},E)$-strand is\[\left(\begin{bmatrix}12\end{bmatrix}, \begin{bmatrix} 0 & 34\end{bmatrix}, \begin{bmatrix} 678 & 0 \\ 0 & 678 \end{bmatrix} , \begin{bmatrix} -5&-5&-5&0&0&0\\ 0&0&0&-5&-5&-5\end{bmatrix}\right),\] thus \[(e_1,e_2,e_3,e_4) \leqslant_{lex} (2,4,7,8)\] and there is actually equality here.
\end{example}

\subsection{Chained codes and matroids}

\begin{defn}\label{def:chainedcode}
Let $C$ be a linear code of dimension $k$. It is called chained if there is a chain \[D_1 \subset D_2 \subset \cdots\ \subset D_k\] of linear subcodes, such that $D_i$ computes $d_i$, for $1 \leqslant i \leqslant k$.
\end{defn}

Then we have: 
\begin{prop}\label{prop:chainedcode}
Let $C$ be a linear code of dimension $k$. Then the following assertions are equivalent: \begin{itemize}
\item The code $C$ is chained,
\item $(d_1,\cdots,d_k)=(e_1,\cdots,e_k)$,
\item $(d_1,\cdots,d_k)=(\tilde{e}_1,\cdots,\tilde{e}_k)$,
\item $(d_1,\cdots,d_k)=(g_1,\cdots,g_k)$.
\end{itemize}
\end{prop}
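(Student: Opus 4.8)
The plan is to prove the four equivalences by establishing a cycle of implications, exploiting the fact that $(d_1,\dots,d_k)$ is a common lower bound for all three greedy weight sequences. First I would record the basic inequalities: by Definition \ref{def:hammingM} together with Definitions \ref{def:greedyM} and \ref{def:CEZM}, for each $i$ we have $d_i \leqslant e_i$, $d_i \leqslant \tilde e_i$, and $d_i \leqslant g_i$, since every greedy $i$-subcode (of any flavour) has support a cycle of nullity $i$ whose cardinality is at least the minimal such cardinality, namely $d_i$. These inequalities are the scaffolding: each of the three equalities in the statement asserts that the corresponding greedy sequence attains this lower bound termwise.

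The cleanest route is to show \emph{chained} $\Rightarrow$ each of the three equalities, and then close the loop by showing any one equality $\Rightarrow$ chained. For the forward direction, suppose $C$ is chained with $D_1 \subset \cdots \subset D_k$ where $D_i$ computes $d_i$. By Corollary \ref{l1} (or directly by Lemma \ref{lem:coden}), each $\sigma_i := \Supp(D_i)$ lies in $N_i$ and has cardinality $d_i$, and the inclusions $D_i \subset D_{i+1}$ force $\sigma_i \subsetneq \sigma_{i+1}$; thus $S=(\sigma_1,\dots,\sigma_k) \in \Sigma'$ with $e(S)=(d_1,\dots,d_k)$. Since $(d_1,\dots,d_k)$ is a termwise lower bound, this chain is simultaneously $\min_{lex}$ and $\min_{revlex}$, giving $(e_1,\dots,e_k)=(\tilde e_1,\dots,\tilde e_k)=(d_1,\dots,d_k)$ via Proposition \ref{prop:2defs} and Theorem \ref{th:coincide}. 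For the CEZ weights, the same chain shows each $\sigma_i$ contains $\sigma_{i-1}\in N_{i-1}$ of cardinality $d_{i-1}$, so $g_i \leqslant |\sigma_i| = d_i$, whence $g_i = d_i$.

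For the converse implications I would argue that each equality produces a computing chain. If $(d_1,\dots,d_k)=(e_1,\dots,e_k)$, take $S=(\sigma_1,\dots,\sigma_k)\in\Sigma'$ computing the bottom-up weights; then $|\sigma_i|=e_i=d_i$, so each $\sigma_i$ computes $d_i$, and setting $D_i:=C(\sigma_i)$ gives, by Lemma \ref{lem:techn}, a chain of subcodes with $\dim D_i = n(\sigma_i)=i$, $\Supp(D_i)=\sigma_i$, and $D_{i-1}\subset D_i$; hence $C$ is chained. The identical argument works starting from the top-down equality. The CEZ case is the one subtlety I would watch: from $g_i=d_i$ one extracts, for each $i$, a cycle of cardinality $d_i$ containing a $(i-1)$-cycle of cardinality $d_{i-1}$, and I would need to splice these into a single nested chain computing all the $d_i$ simultaneously — the recursive structure of Definition \ref{def:CEZM} makes the chosen $(i-1)$-cycle itself a computing cycle, so iterating downward yields the required chain $\sigma_1\subsetneq\cdots\subsetneq\sigma_k$ with $|\sigma_i|=d_i$.

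The main obstacle is the CEZ direction: unlike the bottom-up weights, a single CEZ-optimal cycle $\sigma_i$ of cardinality $d_i$ need only contain \emph{some} $d_{i-1}$-cycle $\tau$, and a priori the $\tau$ witnessing optimality at level $i$ need not coincide with the $\sigma_{i-1}$ witnessing it at level $i-1$; Example \ref{nonincreasing} shows the $g_i$ can even fail to be monotone, so one cannot naively assume compatibility. The resolution is that when $g_i=d_i$ for \emph{all} $i$, each computing cycle $\sigma_i$ has cardinality exactly $d_i$ and contains a $(i-1)$-cycle of minimal cardinality $d_{i-1}$, which itself then computes $d_{i-1}$; so the downward recursion canonically selects $\sigma_{i-1}\subset\sigma_i$ of cardinality $d_{i-1}$, and no incompatibility arises. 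I would therefore build the chain top-down, at each step using the defining containment of Definition \ref{def:CEZM} to descend to a $d_{i-1}$-cycle inside the already-chosen $\sigma_i$.
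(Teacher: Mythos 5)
The paper's own proof of this proposition is the single sentence ``This is obvious from the definitions,'' so there is little to compare against; measured against that, your write-up is the more careful one, and for three of the four bullets it is correct and surely the intended argument. Chainedness produces a chain in $\Sigma'$ whose cardinality vector equals the termwise lower bound $(d_1,\cdots,d_k)$, hence is simultaneously the \emph{lex}- and \emph{revlex}-minimum of $\overline{\Sigma}$, and the same chain witnesses $g_i\leqslant d_i$ for every $i$; conversely, any chain realizing $(e_1,\cdots,e_k)=(d_1,\cdots,d_k)$ or $(\tilde{e}_1,\cdots,\tilde{e}_k)=(d_1,\cdots,d_k)$ consists of cycles $\sigma_i\in N_i$ with $|\sigma_i|=d_i$, and the subcodes $D_i=C(\sigma_i)$ form a chain computing the $d_i$ by Lemma~\ref{lem:techn}. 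All of that is fine and matches the definitions.

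The genuine gap is exactly where you flagged it, namely $(d_1,\cdots,d_k)=(g_1,\cdots,g_k)\Rightarrow$ chained, and your proposed repair does not close it. The hypothesis $g_i=d_i$ provides, \emph{for each $i$ separately}, some pair $\tau_{i-1}\subsetneq\sigma_i$ with $|\tau_{i-1}|=d_{i-1}$ and $|\sigma_i|=d_i$. In your top-down descent you set $\sigma_{i-1}:=\tau_{i-1}$; this set does compute $d_{i-1}$ in the sense of Definition~\ref{def:Wei}, but Definition~\ref{def:CEZsubcode} does not assert that \emph{every} subcode computing $d_{i-1}$ contains one computing $d_{i-2}$ --- only that \emph{some} one does --- so the descent cannot in general be continued from the particular $\tau_{i-1}$ you were handed; the witness pairs at consecutive levels need not splice. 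This is not merely a presentational defect: the implication itself fails. Take the binary $[51,4]$ code whose generator matrix has, as columns, each nonzero $u\in\mathbb{F}_2^4$ with multiplicity $m_u$, where $m_{\epsilon_2}=10$, $m_{\epsilon_3}=m_{\epsilon_4}=m_{\epsilon_3+\epsilon_4}=7$, $m_{\epsilon_1}=6$, $m_{\epsilon_1+\epsilon_2}=5$, and $m_u=1$ for the remaining nine points ($\epsilon_1,\cdots,\epsilon_4$ the standard basis). Using $wt(D_V)=51-\#\{\textrm{columns lying in }V^{\perp}\}$ one checks directly that $(d_1,d_2,d_3,d_4)=(17,30,41,51)$; that the unique word of weight $17$ is $\epsilon_1G$; that $d_2$ is attained exactly by $\langle\epsilon_1,\epsilon_2\rangle G$ (which contains $\epsilon_1G$, so $g_2=d_2$) and by $\langle\epsilon_3,\epsilon_4\rangle G$; that $d_3$ is attained only by $\langle\epsilon_1,\epsilon_3,\epsilon_4\rangle G$, whose unique weight-$30$ subcode is $\langle\epsilon_3,\epsilon_4\rangle G$ (so $g_3=d_3$ and $g_4=d_4$); and that all three nonzero words of $\langle\epsilon_3,\epsilon_4\rangle G$ have weight $20>17$. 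Hence $(g_1,\cdots,g_4)=(d_1,\cdots,d_4)$ while no chain computing all the $d_i$ exists. So the fourth bullet cannot be obtained by your descent (nor, apparently, by any argument), and the equivalence should be restricted to the first three items, with $(d)=(g)$ retained only as a necessary condition for chainedness.
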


\begin{proof} This is obvious from the definitions.
\qed \end{proof}

This can be generalized to matroids:

\begin{defn}\label{def:chainedmatroid}
A matroid of rank $n-k$ on a set of cardinality $n$ is chained if there exists a chain \[\sigma_1\subset \cdots \subset \sigma_k\] where $\sigma_i \in N_i$ computes $d_i$.
\end{defn}

\begin{prop}\label{prop:chainedmatroid}
Let $M$ be a matroid of rank $n-k$ on a set of cardinality $n$. then the following assertions are equivalent:
\begin{itemize}
\item The matroid $M$ is chained,
\item $(d_1,\cdots,d_k)=(e_1,\cdots,e_k)$,
\item $(d_1,\cdots,d_k)=(\tilde{e}_1,\cdots,\tilde{e}_k)$,
\item $(d_1,\cdots,d_k)=(g_1,\cdots,g_k)$.
\end{itemize}
\end{prop}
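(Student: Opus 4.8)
The plan is to route everything through the equalities, relying on the elementary bounds $d_i \leqslant e_i$, $d_i \leqslant \tilde e_i$ and $d_i \leqslant g_i$, valid for every $i$ because each greedy weight is the cardinality of some member of $\mathcal{N}_i$ while $d_i$ is the minimum of $|\sigma|$ over all of $\mathcal{N}_i$ (Definition \ref{def:hammingM}); throughout I would use Proposition \ref{prop:2defs} to work with $\Sigma'$ and with $N_i$ rather than $\mathcal{N}_i$. If $M$ is chained, a chain $\sigma_1\subsetneq\cdots\subsetneq\sigma_k$ with $\sigma_i\in N_i$ and $|\sigma_i|=d_i$ gives an element $S\in\Sigma'$ with $e(S)=(d_1,\cdots,d_k)$, so $(e_1,\cdots,e_k)\leqslant_{lex}(d_1,\cdots,d_k)$ and $(\tilde e_1,\cdots,\tilde e_k)\leqslant_{revlex}(d_1,\cdots,d_k)$; since $d_i\leqslant e_i$ and $d_i\leqslant\tilde e_i$ hold in every coordinate, the reverse lex- and revlex-inequalities hold as well, forcing equality in both cases. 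The same chain gives, for each $i$, a set of cardinality $d_{i-1}$ inside a nullity-$i$ cycle of cardinality $d_i$, whence $g_i\leqslant d_i$ and thus $g_i=d_i$. This settles the three implications ``chained $\Rightarrow$ equality''.

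For the converses with the bottom-up and top-down weights I would simply extract the chain: by Proposition \ref{prop:2defs} the lex- (resp.\ revlex-) minimum is attained at some $S=(\sigma_1,\cdots,\sigma_k)\in\Sigma'$, and then each $\sigma_i\in N_i$ has $|\sigma_i|=e_i=d_i$ (resp.\ $\tilde e_i=d_i$), so $S$ is itself a chain witnessing that $M$ is chained. Combined with the first paragraph this already yields chained $\Leftrightarrow (d)=(e)\Leftrightarrow(d)=(\tilde e)$, and it only remains to prove $(d)=(g)\Rightarrow$ chained.

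For this last implication I would induct on $i$, building a chain $\sigma_1\subsetneq\cdots\subsetneq\sigma_i$ of minimal cycles with $|\sigma_t|=d_t$. The base cases $i\leqslant 2$ are immediate, since the containment clause in Definition \ref{def:CEZM} for $g_2=d_2$ is itself a length-two chain. The difficulty is the inductive step from $i-1$ to $i$: the hypothesis $g_i=d_i$ produces only \emph{some} pair $\mu\subsetneq\rho$ with $\mu\in N_{i-1}$, $\rho\in N_i$, $|\mu|=d_{i-1}$, $|\rho|=d_i$, and this $\mu$ need not coincide with the top $\sigma_{i-1}$ of the chain built so far, nor need $\sigma_{i-1}$ lie inside any minimal $i$-cycle of cardinality $d_i$. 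Reconciling the per-level CEZ witnesses into a single nested chain is, I expect, the real content of the proposition.

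The way I would break this obstacle is to prove a one-step lemma: if $g_i=d_i$, then \emph{every} $\tau\in N_{i-1}$ with $|\tau|=d_{i-1}$ is contained in some $\sigma\in N_i$ with $|\sigma|=d_i$ --- equivalently, every minimal cycle of cardinality $d_i$ contains a minimal cycle of cardinality $d_{i-1}$. Granting this, the induction closes at once by applying it to $\sigma_{i-1}$. I would prove the lemma in the spirit of the exchange argument in Proposition \ref{prop:2defs}: compare $\tau$ with the witness $\mu$ by single-element swaps, using the supermodular inequality $n(X\cap Y)+n(X\cup Y)\geqslant n(X)+n(Y)$ (recorded after Definition \ref{def:nullity}) to track the nullity, so as to transport the cheap extension $\mu\subsetneq\rho$ over to $\tau$. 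Keeping the cardinality pinned at exactly $d_i$ while performing these swaps --- rather than merely producing some nullity-$i$ cover of $\tau$, which supermodularity gives easily --- is the delicate point, and is where I would expect the bulk of the work to lie.
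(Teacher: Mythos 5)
Your handling of the first three bullets, and of the implication ``chained $\Rightarrow (d_1,\cdots,d_k)=(g_1,\cdots,g_k)$'', is correct and complete, and in fact considerably more careful than the paper, whose entire proof reads ``this follows from the definitions.'' You have also located the real difficulty exactly where it lies: the implication $(d_1,\cdots,d_k)=(g_1,\cdots,g_k)\Rightarrow$ chained does not follow from the definitions, because the CEZ condition only supplies a separate two-step witness $\tau\subsetneq\sigma$ at each level $i$, and these witnesses need not glue into one nested chain.

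That gap, however, cannot be closed along the lines you propose: your one-step lemma is false, and in fact the implication itself is false. For the lemma, take the direct sum $U_{1,2}\oplus U_{1,2}\oplus U_{1,3}$ of uniform matroids on ground sets $\{1,2\}$, $\{3,4\}$, $\{5,6,7\}$: here $d_1=2$ and $g_2=d_2=3$ (witnessed by $\{5,6\}\subset\{5,6,7\}$), yet the circuit $\{1,2\}$ of cardinality $d_1$ lies in no nullity-$2$ cycle of cardinality $d_2$; note also that the two formulations you call ``equivalent'' (every cheap $(i-1)$-cycle extends upward, versus every cheap $i$-cycle restricts downward) are genuinely different statements. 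For the implication itself, take $M=U_{1,2}\oplus U_{1,3}\oplus U_{2,5}$ on $A=\{1,2\}$, $B=\{3,4,5\}$, $C=\{6,\cdots,10\}$. A direct computation gives $(d_1,\cdots,d_6)=(g_1,\cdots,g_6)=(2,3,5,7,8,10)$, with CEZ witnesses $\{3,4\}\subset B$, $B\subset A\cup B$, $C\subset A\cup C$, $\{3,4\}\cup C\subset B\cup C$, $B\cup C\subset E$. But $M$ is not chained: the only cycle of nullity $2$ and cardinality $3$ is $B$, the only cycle of nullity $5$ and cardinality $8$ is $B\cup C$, and the only cycles of nullity $3$ and cardinality $5$ are $A\cup B$ and $C$, neither of which sits between $B$ and $B\cup C$. (One checks consistently that $e_4=8>7=d_4$ here.) Since these uniform matroids are representable over $\mathbb{F}_q$ for $q\geqslant 4$, Proposition \ref{prop:chainedcode} is affected in the same way. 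So the fourth bullet is not equivalent to the other three; neither your argument nor the paper's one-line proof can be completed without first repairing the statement, e.g.\ by deleting that bullet or strengthening the definition of the CEZ weights.
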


\begin{proof}This follows from the definitions.
\qed \end{proof}
Moreover, we have the following:
\begin{prop}\label{prop:coincide}
A linear code is chained if and only if its associated matroid is chained.
\end{prop}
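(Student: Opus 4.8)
The plan is to establish the equivalence by translating between the two notions of being chained via the dictionary built in the previous subsection, chiefly Lemma~\ref{lem:techn} and Corollary~\ref{l1}. Concretely, I would prove each direction by showing that a chain of subcodes computing the $d_i$ for the code corresponds to a chain of cycles computing the $d_i$ for the matroid, and conversely. The crucial observation is that Lemma~\ref{lem:techn} gives, for any $X\subset E$ with $n(X)=i$, a subcode $C(X)$ with $\dim C(X)=n(X)=i$ and $\Supp(C(X))\in N_i$ precisely when $X\in N_i$; conversely any subcode $D\subset C$ of dimension $i$ has $\Supp(D)\in\mathcal{N}_i$, and when $D$ computes $d_i$ one gets $\Supp(D)\in N_i$ by the argument of Corollary~\ref{l1}. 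Thus the operations $X\mapsto C(X)$ and $D\mapsto\Supp(D)$ are mutually inverse on the relevant objects and preserve both the dimension/nullity index and the weight $|{\cdot}|$.

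First I would treat the forward direction. Suppose $C$ is chained, so there is a chain $D_1\subset\cdots\subset D_k$ with $D_i$ computing $d_i$. Set $\sigma_i=\Supp(D_i)$. By Corollary~\ref{l1} each $\sigma_i\in N_i$, and since $D_i$ computes $d_i$ we have $|\sigma_i|=wt(D_i)=d_i$, so $\sigma_i$ computes the matroid weight $d_i$ by Definition~\ref{def:hammingM}. The inclusions $D_i\subset D_{i+1}$ give $\sigma_i=\Supp(D_i)\subset\Supp(D_{i+1})=\sigma_{i+1}$, and these inclusions are strict because the nullities, hence the dimensions, strictly increase. Therefore $\sigma_1\subset\cdots\subset\sigma_k$ is a chain of the type required in Definition~\ref{def:chainedmatroid}, and $M=M_C$ is chained.

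For the converse, suppose $M_C$ is chained with a chain $\sigma_1\subset\cdots\subset\sigma_k$, $\sigma_i\in N_i$ computing $d_i$. Put $D_i=C(\sigma_i)$. By Lemma~\ref{lem:techn}, since $\sigma_i\in N_{n(\sigma_i)}=N_i$ we get $\dim D_i=n(\sigma_i)=i$ and $\Supp(D_i)=\sigma_i$, so $wt(D_i)=|\sigma_i|=d_i$, i.e.\ $D_i$ computes $d_i$. The inclusions $\sigma_i\subset\sigma_{i+1}$ immediately yield $D_i=C(\sigma_i)\subset C(\sigma_{i+1})=D_{i+1}$ directly from the definition of $C(X)$ as the codewords supported inside $X$. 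Hence $D_1\subset\cdots\subset D_k$ is a chain of subcodes computing the $d_i$, so $C$ is chained.

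I do not expect a serious obstacle here, since all the technical work has already been done in Lemmas~\ref{lem:techn}, \ref{lem:coden} and Corollary~\ref{l1}; the statement is essentially the assertion that the correspondence $X\leftrightarrow C(X)$ is a chain-preserving bijection on weight-minimizing cycles/subcodes. The one point deserving mild care is verifying that the inclusions stay \emph{strict} on both sides, which follows from strict monotonicity of the nullity index $i$ (equivalently the dimension) along the chain, guaranteed by the unit rank increase of $n$. Alternatively, one can phrase the entire proof through Proposition~\ref{prop:chainedcode} and Proposition~\ref{prop:chainedmatroid}, deducing chainedness on each side from the equality $(d_1,\dots,d_k)=(e_1,\dots,e_k)$ together with Theorem~\ref{th:coincide}, which says the $e_i$ (and $d_i$) of $C$ and $M_C$ agree; this gives an even shorter argument but hides the concrete chain construction, so I would include the explicit version above for transparency.
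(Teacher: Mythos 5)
Your proof is correct, but it takes a more explicit route than the paper, whose entire proof is the one--line remark that the statement is ``a direct consequence of Theorem~\ref{th:coincide}'' --- i.e.\ the intended argument is: $C$ is chained iff $(d_1,\dots,d_k)=(e_1,\dots,e_k)$ for $C$ (Proposition~\ref{prop:chainedcode}), the tuples $(d_i)$ and $(e_i)$ of $C$ and of $M_C$ agree (Theorem~\ref{th:coincide} together with $d_i(C)=d_i(M_C)$ from \cite{Johnsen13}), and that equality characterizes chainedness of $M_C$ (Proposition~\ref{prop:chainedmatroid}). This is exactly the ``alternative'' you sketch in your last paragraph. Your main argument instead constructs the correspondence directly: $D_i\mapsto \mathrm{Supp}(D_i)$ sends a chain of subcodes computing the $d_i$ to a chain of cycles in $N_i$ of the right cardinalities, and $\sigma_i\mapsto C(\sigma_i)$ goes back, using Lemma~\ref{lem:techn}. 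This buys transparency (one sees the actual bijection between optimal chains, not just the equality of numerical invariants), at the cost of re-deriving a special case of the machinery already packaged in Theorem~\ref{th:coincide}. Two small points of care: (i) Corollary~\ref{l1} is stated for greedy $i$-subcodes, not for arbitrary subcodes computing $d_i$, so strictly speaking you should argue that a $d_i$-computing subcode $D$ has $\mathrm{Supp}(D)\in N_i$ directly --- which is immediate, since $n(\mathrm{Supp}(D))=i$ by Lemma~\ref{lem:coden} and $|\mathrm{Supp}(D)|=d_i$ is already minimal among sets of nullity $i$, forcing inclusion-minimality; (ii) in the converse direction, saying that $D_i=C(\sigma_i)$ ``computes $d_i$'' for the code tacitly uses $d_i(C)=d_i(M_C)$, which is available from Theorem~\ref{thm:AAECC} but should be cited. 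Neither point is a gap, only a matter of attribution.
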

\begin{proof} This is a direct consequence of Theorem~\ref{th:coincide}.
 \ref{th:coincide}.
\qed \end{proof}

We will end this article with commenting on the connection between chainedness of a code or matroid, and properties of minimal resolutions of their Stanley-Reisner rings.

\begin{defn}\label{def:chained}
Let $M$ be a matroid of rank $n-k$ on a set of cardinality $n$. It has a pure resolution if the $\N$-graded resolution satisfies: \[\forall 1 \leqslant i \leqslant k,\ \exists! j_i,\ \beta_{i,j_i}\neq 0.\] Furthermore, we say that the pure resolution is linear if \[\forall 1\leqslant i <k,\ j_{i+1} = j_i.\] A linear code has pure resolution (resp. linear resolution) if its associated matroid has.
\end{defn}
\begin{prop} \label{prop:purechained}
Linear codes and matroids with pure resolutions are chained.
\end{prop}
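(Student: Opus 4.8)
The plan is to prove the statement of Proposition~\ref{prop:purechained}, namely that a matroid (and hence, by Definition~\ref{def:chained} together with Proposition~\ref{prop:coincide}, a linear code) with a pure $\N$-graded resolution is chained. The strategy is to use the characterization of chainedness from Proposition~\ref{prop:chainedmatroid}: it suffices to exhibit a chain $\sigma_1 \subsetneq \cdots \subsetneq \sigma_k$ with $\sigma_i \in N_i$ and $|\sigma_i| = d_i$, or equivalently to show that $(d_1,\cdots,d_k) = (e_1,\cdots,e_k)$. I would work directly with the bottom-up greedy weights and show they coincide with the Hamming weights.

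First I would unpack what purity gives us. By Theorem~\ref{thm:AAECC}, $\beta_{i,j} \neq 0$ if and only if there is a member of $N_i$ of cardinality $j$, and $d_i = \min\{j : \beta_{i,j} \neq 0\}$. Purity says that for each $i$ there is a \emph{unique} $j_i$ with $\beta_{i,j_i} \neq 0$; hence $j_i = d_i$, and every element of $N_i$ has cardinality exactly $d_i$. This is the key structural consequence: all cycles of a fixed nullity $i$ have the same cardinality $d_i$. The heart of the argument is then to build the chain one step at a time, using the unique rank increase property of the nullity function $n$. Starting from any $\sigma_1 \in N_1 = \C$ (which automatically has $|\sigma_1| = d_1$), I would inductively produce $\sigma_{i} \supsetneq \sigma_{i-1}$ with $\sigma_i \in N_i$: given $\sigma_{i-1} \in N_{i-1}$, since $\sigma_{i-1} \subsetneq E$ and $n$ increases in unit steps as we add elements, I can enlarge $\sigma_{i-1}$ to some $\rho$ with $n(\rho) = i$, and then shrink $\rho$ to an inclusion-minimal element $\sigma_i \in N_i$ still containing $\sigma_{i-1}$. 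The point is that $\sigma_i \in N_i$ forces $|\sigma_i| = d_i$ by purity, so the chain automatically computes all the $d_i$ simultaneously.

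The main obstacle to watch is the enlarge-then-shrink step: when I pass from $\rho$ (with $n(\rho)=i$ and $\sigma_{i-1} \subset \rho$) to an inclusion-minimal $\sigma_i \in N_i$ below $\rho$, I must guarantee that $\sigma_i$ still contains $\sigma_{i-1}$. In general a minimal nullity-$i$ set inside $\rho$ need not contain a prescribed subset. This is precisely where purity is doing real work: since every element of $N_{i-1}$ already has the maximal possible cardinality $d_{i-1}$, there is no strictly smaller nullity-$(i-1)$ set, so removing any element of $\sigma_{i-1}$ drops the nullity, which constrains how a minimal nullity-$i$ subset of $\rho$ can meet $\sigma_{i-1}$. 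I would make this precise with the submodular inequality for $n$ (the inequality recorded after Definition~\ref{def:nullity}), arguing exactly as in the proof of Proposition~\ref{prop:2defs}: if some minimal $\sigma_i$ failed to contain $\sigma_{i-1}$, I would pick $x \in \sigma_{i-1}\setminus\sigma_i$, form $\rho' = \sigma_{i-1}\setminus\{x\}$ (which has nullity $i-2$ by minimality of $\sigma_{i-1}$), and derive the contradiction $2i-2 \geq n(\rho') + n(\sigma_i) \geq n(\sigma_{i-1}) + n(\sigma_i \setminus\{x\}) \geq (i-1) + i$. Once the chain is constructed, $|\sigma_i| = d_i$ holds for all $i$ by purity, so $M$ is chained by Definition~\ref{def:chainedmatroid}, and the code statement follows from Proposition~\ref{prop:coincide}.
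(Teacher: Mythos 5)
Your proposal is correct and follows essentially the same route as the paper: the whole content is that purity together with Theorem~\ref{thm:AAECC} forces every element of $N_i$ to have cardinality exactly $d_i$, after which any chain $\sigma_1\subsetneq\cdots\subsetneq\sigma_k$ with $\sigma_i\in N_i$ (whose existence is already guaranteed by Proposition~\ref{prop:2defs}, since the bottom-up greedy weights are computed by such a chain) automatically computes all the $d_i$. You rebuild that chain by hand instead of citing Proposition~\ref{prop:2defs}, which is fine, but your final displayed inequality is mis-transcribed: with $x\in\sigma_{i-1}\setminus\sigma_i$ the set $\sigma_i\setminus\{x\}$ equals $\sigma_i$, and the middle step is not a valid instance of submodularity. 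The correct version is the one in the proof of Proposition~\ref{prop:2defs}: take $\sigma_i$ minimal among nullity-$i$ sets containing $\sigma_{i-1}$, suppose $\tau\subsetneq\sigma_i$ has $n(\tau)=i$ with (w.l.o.g.) $\tau=\sigma_i\setminus\{x\}$ for some $x\in\sigma_{i-1}$, and apply $n(\sigma_{i-1}\cap\tau)+n(\sigma_{i-1}\cup\tau)\geqslant n(\sigma_{i-1})+n(\tau)$, which gives $(i-2)+i\geqslant(i-1)+i$, a contradiction.
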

\begin{proof} This follows from the fact that $\beta_{i,X} \neq 0 \Leftrightarrow X \in N_i$ and the definitions.
\qed \end{proof}

MDS codes and uniform matroids have linear resolutions, and as such are chained. On the other hand, we have some codes that do not have linear resolutions, but pure resolutions. Examples of such codes are Reed-M\"uller of the first order and constant weight codes (\cite[Theroem 2.1]{Johnsen14}). These codes are thus also chained. In the case of constant weight codes, we have in addition that any chain of subcodes of dimension $i$ actually compute $e_i=d_i$. In general, chained codes do not need to have pure resolutions. For example, non-binary Veronese codes studied in~\cite{Johnsen20} are such  codes. These codes are defined through the Veronese embedding $\mathbb{P}^2 \rightarrow \mathbb{P}^5$. Elements of $N_i$ correspond to complements of geometrical configurations, and it follows easily from~\cite[Theorem 21]{Johnsen20} that the code is chained but does not have pure resolution.\\
The set of codes/matroids with pure/linear resolutions is not closed under taking duals. However, we have:

\begin{prop}\label{prop:dualchained}
A matroid (resp. linear code) is chained if and only if its dual (resp. orthogonal complement) is chained.
\end{prop}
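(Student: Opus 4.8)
The plan is to reduce the whole statement to a clean set-theoretic identity and then to play the ordinary Wei duality of~\cite{Larsen05} against the greedy Wei duality of Theorem~\ref{firstmain}. The code case will follow immediately from the matroid case: since the matroid of the orthogonal complement $C^{\perp}$ is exactly the dual matroid $\overline{M}$ of the matroid of $C$ (a standard fact, visible already from the parity-check description of $M_C$), Proposition~\ref{prop:coincide} lets me replace ``code'' by ``associated matroid'' throughout, so I only need to treat matroids.

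First I would record the right reformulation of chainedness. By Proposition~\ref{prop:chainedmatroid}, $M$ is chained if and only if $(d_1,\dots,d_k)=(e_1,\dots,e_k)$, and $\overline{M}$ is chained if and only if $(\overline{d}_1,\dots,\overline{d}_{n-k})=(\overline{\tilde e}_1,\dots,\overline{\tilde e}_{n-k})$; I deliberately pick the bottom-up weights on the $M$-side and the top-down weights on the $\overline{M}$-side, because these are precisely the two families paired by Theorem~\ref{firstmain}. I would then upgrade each sequence equality to an equality of \emph{unordered sets}. The forward direction is trivial; for the converse I use that $(e_i)_i$ is strictly increasing (noted after Definition~\ref{def:CEZM}) together with the componentwise bound $d_i\le e_i$ (every cycle of nullity $i$ has cardinality $\ge d_i$, so the constrained minimum defining $e_i$ is at least $d_i$). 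If $\{d_i\}=\{e_i\}$ as sets, then the $d_i$ are $k$ distinct values, the assignment $d_i=e_{\pi(i)}$ defines a permutation $\pi$, and $d_i\le e_i$ forces $\pi(i)\le i$ for all $i$; a permutation with $\pi(i)\le i$ for every $i$ is the identity, whence $d_i=e_i$. The same argument, using that $(\overline{\tilde e}_j)_j$ is strictly increasing and $\overline{d}_j\le\overline{\tilde e}_j$, shows that $\overline{M}$ is chained if and only if $\{\overline{d}_j\}=\{\overline{\tilde e}_j\}$.

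With chainedness expressed purely through set equalities, the conclusion follows by comparing complements inside $\{1,\dots,n\}$. The ordinary Wei duality~\cite{Larsen05} gives $\{d_i\}\sqcup\{n+1-\overline{d}_j\}=\{1,\dots,n\}$, while Theorem~\ref{firstmain} gives $\{e_i\}\sqcup\{n+1-\overline{\tilde e}_j\}=\{1,\dots,n\}$; thus the complements of $\{d_i\}$ and $\{e_i\}$ are $\{n+1-\overline{d}_j\}$ and $\{n+1-\overline{\tilde e}_j\}$ respectively. Hence $\{d_i\}=\{e_i\}$ if and only if these complements coincide, if and only if $\{n+1-\overline{d}_j\}=\{n+1-\overline{\tilde e}_j\}$, if and only if (applying the involution $x\mapsto n+1-x$) $\{\overline{d}_j\}=\{\overline{\tilde e}_j\}$. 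Chaining this with the equivalences of the previous paragraph reads: $M$ is chained if and only if $\overline{M}$ is chained, which is the claim.

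The only genuine obstacle is the bookkeeping in the last paragraph: one must ensure the two dualities are \emph{aligned}, that is, that the family detecting chainedness of $M$ (the bottom-up $e_i$) is exactly the family whose duality partner detects chainedness of $\overline{M}$ (the top-down $\overline{\tilde e}_j$). This is precisely why Theorem~\ref{firstmain} pairs $e$ with $\overline{\tilde e}$ rather than with $\overline{e}$, and it is the reason I chose the bottom-up characterization for $M$ and the top-down one for $\overline{M}$ at the very start. Everything else --- the passage from sequences to sets and back, and the involution $x\mapsto n+1-x$ --- is routine.
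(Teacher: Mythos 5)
Your argument is correct and is exactly the route the paper takes: its proof is the one-line remark ``this follows from Wei duality for greedy weights and Hamming weights,'' i.e.\ the combination of the duality of~\cite{Larsen05} for the $d_i$ with Theorem~\ref{firstmain} for the pair $(e_i,\overline{\tilde e}_j)$, mediated by Proposition~\ref{prop:chainedmatroid}. You have simply supplied the bookkeeping (the passage between sequence and set equalities, and the reduction of the code case to the matroid case) that the paper leaves implicit, and that bookkeeping is sound.
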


\begin{proof} This follows from Wei duality for greedy weights and Hamming weights.
\qed \end{proof}

\bibliographystyle{abbrv}
\bibliography{biblio}

\end{document}